\renewcommand{\thefootnote}{\fnsymbol{footnote}}
\newtheorem{theorem}{Theorem}
\newtheorem{corollary}{Corollary}
\newtheorem{lemma}{Lemma}
\newtheorem{proposition}{Proposition}
\newtheoremstyle{redstyle}
     {3pt}
     {3pt}
     {\color{black}}
     {}
     {\color{red}\bfseries}
     {:}
     {.5em}
     {}
\theoremstyle{redstyle}
\newcommand{\cA}{{\mathcal A}}
\newcommand{\cE}{{\mathcal E}}
\newcommand{\lmin}{\ell_{\min}}
\newcommand{\lmax}{\ell_{\max}}
\newcommand{\labell}[1]{\label{#1}}
\renewcommand{\paragraph}[1]{\vspace*{0.5ex}\noindent {\bf #1}}
\newcommand{\remove}[1]{}
\newcommand{\NAT}{{\mathbb N}}
\newcommand{\dk}[1]{{\color{red}{#1}}}
\newcommand{\comment}[1]{}
\begin{document}
\definechangesauthor[color=red]{dk}
\definechangesauthor[color=blue, name={Tomasz Jurdziński}]{tj}
\definechangesauthor[color=magenta]{kl}

\title{Online Packet Scheduling\\ under Adversarial Jamming\thanks{%
This work was supported by the Polish National Science Centre grant
DEC-2012/06/M/ST6/00459.
}
}

\author{%
    Tomasz Jurdzinski\footnotemark[2]
    \and
    Dariusz R.~Kowalski\footnotemark[3]
    \and
    Krzysztof Lorys\footnotemark[2]
}

\footnotetext[2]{Institute of Computer Science, University of Wroc{\l}aw, Poland.}

\footnotetext[3]{Department of Computer Science,
            University of Liverpool,
            Liverpool L69 3BX, UK.
            }


\date{}

\maketitle


\begin{abstract}
We consider the problem of scheduling packets of different lengths via a directed communication link prone to jamming errors. Dynamic packet arrivals and errors are modelled by an adversary. We focus on estimating relative throughput of online scheduling algorithms,
that is, the ratio between the throughputs achieved by the algorithm and the best scheduling
for the same arrival and error patterns.
This framework allows more accurate analysis of performance of online scheduling algorithms,
even in worst-case arrival and error scenarios.
We design an online algorithm for scheduling packets of arbitrary lengths,
achieving optimal relative throughput in $(1/3,1/2]$ (the exact value depends on packet lengths).
In other words, for any arrival and jamming patterns, our solution gives throughput
which is no more than $c$ times worse than the best possible scheduling for
these patters, where $c\in [2,3)$ is the inverse of relative throughput.
Another algorithm we design makes use of additional resources in order to achieve relative throughput $1$,
that is, it achieves at least as high throughput as the best schedule without such resources,
for any arrival and jamming patterns.
More precisely, we show that if the algorithm can run with double speed, i.e., with twice higher frequency,
then its relative throughput is $1$. This demonstrates that throughput of the best online scheduling algorithms
scales well with resource augmentation.

\vspace*{.3ex}
\noindent
{\bf Keywords:}
Packet scheduling,
Dynamic packet arrivals,
Adversarial jamming,
Online algorithms,
Relative throughput,
Resource augmentation.
\end{abstract}
\ 
\vspace*{-2ex}

\thispagestyle{empty}
\setcounter{page}{0}


\renewcommand{\thefootnote}{\arabic{footnote}}

\newcommand{\alg}{ALG}
\newcommand{\adv}{ADV}
\newcommand{\opt}{OPT}
\newcommand{\off}{OFF}

\newcommand{\mgreedy}{MGreedy}


\section{Introduction}
\label{s:intro}

\paragraph{Motivation.}
%
Achieving high-level reliability in packet scheduling has recently become
more and more important due to substantial increase of the scale of networks and
higher fault-tolerant demands of many incoming applications.
In the era of Internet of Things and nano-devices, it will no longer be possible to
attend devices physically, and therefore the designed protocols must be stable and robust
no matter of failure pattern. Imagine the problem of thousands of malfunctioning
nano-capsules with overflown buffers that need to be somehow removed from the human body,
or the consequences of lack of communication between AVs with humans onboard
or medical devices incorporated into patients bodies,
even if such case might happen with probability less than $1\%$.

\paragraph{Our Approach.}
This paper
%
studies a fundamental problem of online packet scheduling via {\em unreliable} link (also called
a channel),
when transmitted packets may be interrupted by {\em unrestricted} jamming errors.
This problem was recently introduced in~\cite{Anta-SIROCCO-2013} and analyzed for
two different packet lengths.
Packets arrive dynamically to one end of the link, called a sender, and need
to be transmitted in full, i.e., without any in-between jamming error, to the other end (called a receiver).
Jamming errors are immediately discovered by the sender.
We analyze all possible scenarios, including worst case ones,
which we model as a conceptually adversary who controls both packet arrivals and channel jamming.
The adversary is unrestricted, in the sense that she may generate {\em any} arrival and error
pattern.
The main objective of the online scheduling protocol is to achieve as high throughput
as possible under current scenario.
In particular, we consider the measure called relative throughput,
which is a long-term form of competitive ratio between the throughput achieved by
the online algorithm and the one reached by optimum offline scheduling solution (i.e., under
the knowledge of adversarial arrivals and errors).

\comment{

[[[PONIZSZY paragraph do usuniecia, fragment o feedback wrzuce do related work]]]

\noindent\emph{Feedback mechanisms:}
Then, moving to the online problem requires detecting 
the packets received with errors, in order to retransmit them.
The usual mechanism~\cite{ARQ_CRC}, which we call {\em deferred feedback}, detects and notifies the sender that a packet has suffered an error after the whole packet has been received by the receiver.
It can be shown that,
even when the packet arrivals are stochastic and packets have the same length, no online scheduling algorithm with deferred feedback can
be competitive with respect to the offline one.
Hence, we center our study in a second mechanism, which we call {\em instantaneous feedback}. It detects and notifies the sender of an error the moment this error occurs.
This mechanism can be thought of as an abstraction of the emerging
Continuous Error Detection (CED) framework~\cite{CED} that uses arithmetic coding to provide continuous error detection.
The difference between deferred and instantaneous feedback is drastic, since for the instantaneous feedback mechanism, and for packets of the same length, it is easy to obtain optimal relative throughput of 1, even in the case of adversarial arrivals.
However, the problem becomes substantially more challenging in the case of non-uniform packet lengths. Hence, we analyze the problem for the case of packets with two different lengths, $\lmin$ and $\lmax$, where $\lmin<\lmax$. \vspace{.2em}

}

\newcommand{\SL}{SL}
\newcommand{\LL}{LL}
\newcommand{\Ralg}{RAlg}
\newcommand{\CRalg}{CRAlg}
\newcommand{\Alg}{\alg}
\newcommand{\rhoflr}{\overline{\rho}}

\paragraph{Our Contribution.}
We design a deterministic online scheduling algorithm achieving optimal relative throughput
for an arbitrary number $k$ of packet lengths $\lmin=\ell_1<\ell_2<\ldots<\ell_k=\lmax$
(Section~\ref{sec:no-speedup}).
We first show a simpler version of the algorithm,
for the case when packet lengths are pairwise divisible (any larger is divisible by any smaller),
in order to demonstrate high-level ideas and analysis leading to related throughput $1/2$.
We then extend the protocol so that it does not need to rely on such limitation about divisibility,
and achieves the relative throughput
$\min_{1\le j < i \le k}\left\{\frac{\lfloor\rho_{i,j}\rfloor}{\lfloor\rho_{i,j}\rfloor+\rho_{i,j}} \right\}$,
where $\rho_{i,j}=\ell_i/\ell_j$ is the ratio between the $i$-th and the $j$-th packet length.
Note that this general formula for relative throughput is
in the range $(1/3,1/2]$,
and it reaches $1/2$ if and only if the pairwise divisibility condition holds.

Unfortunately, the designed protocol does not achieve relative throughput $1$ if the speed-up $2$
is applied (it can be easily checked that the relative throughput is at most $2/3$ in such case),
which implies that it is not well-scalable with resource augmentation.\footnote{%
Note that the considered speed-up $2$ is chosen because we claim linear scalability
of relative throughput with the increase of speed-up, that is, starting from level $1/2$ with no speed-up
we expect the relative throughput to reach value $1$ for speed-up $2$.}
Therefore we design another deterministic online protocol to optimize relative throughput for speedup $2$ (Section~\ref{sec:speedup}).
It is a generalisation of the preamble
protocols proposed in~\cite{Anta-SIROCCO-2013} and~\cite{Anta-FCT-2013} in the case of two
packet lengths.

More details can be found in the full draft of the paper~\cite{arxiv-full}.


\comment{

(Section~\ref{sec:adversarial}), that an online algorithm with instantaneous feedback can achieve at most almost half the relative throughput with respect to the offline one.
It can also be shown
that two basic scheduling policies,
giving
priority either to short {\em($\SL$ -- Shortest Length)} or long
{\em($\LL$ -- Longest Length)} packets, are not efficient under adversarial errors.
Therefore,
we devise a new algorithm, called \Ralg, and show that it achieves the optimal online relative throughput.
Our algorithm, transmits a ``sufficiently'' large number of short packets while making sure that long packets are transmitted from time~to~time.

\begin{table}
\begin{footnotesize}
\begin{center}
	\begin{tabular}{| c | c | c | c |}
	\hline
	 Arrivals & Feedback & Upper Bound & Lower Bound \\ \hline \hline
	 & Deferred  & $0$ & $0$ \\ \cline{2-4}
	 & & & \\	
	Adversarial &  Instantaneous & $T_{\Alg} \leq \rhoflr/(\rho + \rhoflr)$ & $T_{SL-Pr} \geq \rhoflr/(\rho + \rhoflr)$ \\
	  &  &  & \\
	 &  &  $T_{LL} = 0$, $T_{SL} \leq 1/(\rho + 1)$ & \\ \hline \hline
	
	 & Deferred & $0$ & $0$ \\ \cline{2-4}
	 & & & \\	
	Stochastic & Instantaneous& $T_{\Alg} \leq \rhoflr/\rho$ &
	$T_{CSL-Pr} \geq \rhoflr/(\rho + \rhoflr)$, if $\lambda p \lmin \le \rhoflr/ (2\rho)$ \\
	 & & $T_{\Alg} \leq \max\left\{ \lambda p\lmin, \rhoflr/(\rho + \rhoflr)\right\}$, if $p<q$ & $T_{CSL-Pr} \geq \min\left\{ \lambda p\lmin, \rhoflr/\rho \right\}$, otherwise \\
	 & & $T_{LL} = 0$, $T_{SL} \leq 1 / (\rho + 1) $  & \\ \hline 	 	
	\end{tabular}
	\caption{Summary of results presented. The results for deferred feedback are for one packet length, while the results for instantaneous feedback are for 2 packet lengths $\lmin$ and $\lmax$. Note that $\rho = \lmax / \lmin$, $\rhoflr = \lfloor\rho\rfloor$, $\lambda p$ is the arrival rate of $\lmin$ packets, and $p$ and $q=1-p$ are the proportions of $\lmin$ and $\lmax$ packets, respectively.
	}
	\label{t:1}
\end{center}
\end{footnotesize}
\end{table}

}




\paragraph{Previous and related work.}
Packet scheduling~\cite{packet_scheduling} is one of the most fundamental problems in computer networks.
A realistic approach involves {\em online} scheduling~\cite{Peleg1992,online_scheduling},
and therefore a {\em competitive analysis}~\cite{aspnes,tarjan} is often used to evaluate the performance of
proposed solutions.
%
Online scheduling was considered in a number of models;
for more information the reader is referred to~\cite{scheduling} and~\cite{online_scheduling}.


The framework considered in this work was recently introduced in~\cite{Anta-SIROCCO-2013}.
The authors showed that general offline version of this problem, in which the scheduling algorithm knows a priori when errors will occur, is NP-hard.
They also considered algorithms and upper limitations for relative
throughput in case of {\em two} packet lengths.
In particular, they proved that relative throughput of {\em any} online scheduling protocol cannot
be bigger than $\rhoflr/(\rho + \rhoflr)$, where $\rho$ is the ratio between the bigger and the smaller
packet length and $\rhoflr=\lfloor\rho\rfloor$.
(Note that the upper bound becomes $1/2$ if the bigger packet length is a multiplicity of the smaller
packet length.)
This upper bound can be achieved by a protocol scheduling a specific preamble of shorter packets
followed by the Longest\_First rule after every error, but cannot be reached by simpler protocols
such as Longest\_First itself or Shortest\_First (in fact, the relative throughputs of the latter
protocols are far worse than $1/2$: $0$ and $1/(1+\rho)$, respectively, and thus they are not very reliable).
Therefore, it remained open whether there is an online scheduling protocol reaching
the relative throughput of (roughly) $1/2$ for {\em arbitrary} number of packet lengths;
we answer this question in affirmative in this work.
Moreover, as also shown in~\cite{Anta-SIROCCO-2013}, randomization does not help,
which motivates our study of deterministic algorithms.

In~\cite{Anta-FCT-2013}, the authors studied buffer sizes of online scheduling protocols on
error-prone channel. Unlike the relative throughput measure, in order to be positively competitive with
the best scheduling algorithms with respect to the buffer sizes, additional resources need to be given
to the online protocol, i.e., speed-up (higher frequency). This form of resource allocation appeared to be
efficient: for some speed-up smaller than $2$ there is a deterministic online scheduling algorithm
having roughly the same queue sizes as any other scheduling algorithm running without speed-up.
That work motivated us to consider resource augmentation technique,
in the form of using some speed-up (higher frequency),
to reach at least the same throughput as the best scheduler without speed-up
for {\em every} execution.

Wireless packet scheduling was also considered in models with physical constraints included,
such as radio networks or SINR.
Anantharamu et al.~\cite{ACKR-INFOCOM-2010} considered packet scheduling on a multiple
access channel with signal interference, under a restricted adversarial patterns of packet arrivals
and channel jamming.
Kesselheim~\cite{kesselheim} considered packet scheduling problem in
the SINR model,
for both adversarial and stochastic arrivals, but with no errors.
Both papers studied stronger objective measure: maximum time from packet arrival to successful delivery.
%
In this line of research, the most relevant direction was taken by
Richa et al.~\cite{jamming} 
who analyzed competitive throughput of randomized scheduling protocols
on multiple access channels with signal interference
against {\em adaptive, but still restricted, adversarial jamming}.
Therefore, the results obtained in this line of work cannot be directly comparable
with ours, mainly because of assuming restricted arrival and jamming patterns.


Andrews and Zhang~\cite{HSWD} studied buffer stability (i.e., bounded buffers property)
of online packet scheduling on
a wireless channel, where both the channel conditions and the data
arrivals are controlled by an adversary.
They also assumed bounded adversary, as otherwise stability could not be reached.

Our framework could also be applied to other types of channel errors, as long as the feedback
is immediately delivered to the sender, e.g.,
to the emerging Continuous Error Detection (CED) framework~\cite{CED} that uses arithmetic coding to provide continuous error detection.

\section{Model}


We consider a uni-directional point-to-point link in which one end point, called a {\em sender},
transmits packets to the other end point, called a {\em receiver}.
The sender is equipped with unlimited buffer, in which the arriving packets are queued.
Packets may be of different lengths, and may arrive at any time; we assume that time
is continuous, and scheduling algorithm have access to packets as soon as they arrive.
There are $k\ge 2$ different packet lengths, denoted by
$\lmin=\ell_1<\ell_2<\ldots<\ell_k=\lmax$.
For simplicity, we will use the names ``$\ell_i$-packets'' and ``packets $\ell_i$'' 
for packets of length $\ell_i$, for any $1\le i \le k$.
For clarity of presentation, we assume in some parts of the paper 
that $\ell_i/\ell_j$ is an integer for any $1\le j < i \le k$ (so called pairwise divisibility property).
We denote $\rho = \lmax/\lmin$.
We assume that all packets are transmitted at the same bit rate, hence the transmission time is proportional to the packet's length.
The link is prone to jamming errors, that is, transmitted packets might be corrupted at any time point.

\paragraph{\bf Arrival models.}
We consider adversarial packet arrivals:
the packets' arrival time and length are governed by an adversary.
We define an adversarial arrival pattern as a collection of packet arrivals caused by the adversary.

\paragraph{\bf Link jamming errors.}
We consider adversarial model of jamming errors, in which the adversary decides at which time to cause a
jamming error
on the link. The error at time $t$ implies that any packet being transmitted at time $t$ is broken,
and the information about it is immediately delivered to the sender so that it breaks the current transmission and could schedule another packet (or re-schedule the one that was just broken).
A corrupted packet transmission is unsuccessful, in the sense that it is not received by the receiver
and it needs to be retransmitted in full (not necessarily right after the error --- scheduling algorithm
may decide to postpone it and transmit another packet instead); 
otherwise it is understood as not (successfully) transmitted.
We assume that scheduling algorithms do not voluntarily stop transmitting packets before
the end of the transmission, unless they get feedback about jamming error.
An adversarial error pattern is defined as a collection of error events on the link caused by the adversary.

Adversarial models are typically used to argue about the algorithm's behavior in 
any possible scenario, in particular, in the worst-case ones.
%

\paragraph{\bf Efficiency metric: \em Relative throughput.}
We would like to measure throughput of the communication link.
However, due to adversarial errors, the real link capacity may vary in time,
and moreover, due to adversarial packet arrivals, the stream of packets may not be regular
or saturated. Therefore, following~\cite{Anta-SIROCCO-2013}, we pursue a long-term competitive analysis.
Let $\cA$ be an arrival pattern and $\cE$ an error pattern.
For a given deterministic algorithm {\alg},
let $L_{\alg}(\cA,\cE,t)$ be the total length of all the successfully transferred (i.e., non-corrupted) packets by time $t$ under arrival pattern $\cA$ and error pattern $\cE$.
Let {\opt} be the offline optimal algorithm that knows the exact arrival and error patterns, as well as the
online algorithm, before the start of the execution.
We assume that {\opt} devises an optimal schedule that 
minimises the asymptotic ratio (i.e., with time growing to infinity) between the total length of packet transmitted by the online algorithm and the total length of packet transmitted by itself. 
%

%
We require that any pair of patterns $\cA,\cE$
occurring in an execution must allow non-trivial communication, i.e., the value of $L_{\opt}(\cA,\cE,t)$ in the execution is unbounded with $t$ going to infinity.

For arrival pattern $\cA$, adversarial error pattern $\cE$ and time $t$, we define the \emph{relative throughput $T_{\alg}(\cA,\cE,t)$ of a deterministic algorithm $\alg$ by time $t$} as:
\[
T_{\alg}(\cA,\cE,t) = \frac{L_{\alg}(\cA,\cE,t)}{L_{\opt}(\cA,\cE,t)}
\ .
\]
For completeness, $T_{\alg}(\cA,\cE,t)$ equals 1 if $L_{\alg}(\cA,\cE,t)=L_{\opt}(\cA,\cE,t)=0$.

We define the {\em relative throughput} of $\alg$
in the adversarial arrival model as:
\begin{equation}
\label{eq:rel-thr}
T_{\alg} = \inf_{\cA , \cE} \lim_{t \rightarrow \infty} T_{\alg}(\cA,\cE,t)
\ .
\end{equation}


In the analysis of lower and upper bound on relative throughput, we usually focus on comparison
of the number of successful transmissions of packets, weighted by packet lengths, for
periods after {\em sufficiently large} time $t$.
This is because the performances of online and optimal algorithms in a fixed prefix of time
are negligible from perspective of the definition of relative throughput given in Equation~(\ref{eq:rel-thr}).

\paragraph{Resource augmentation --- speed-up.}
In the second part of the paper, in Section~\ref{sec:speedup}, we consider resource augmentation
technique. This technique was recently applied to fault-tolerant scheduling in~\cite{Anta-FCT-2013}
in the context of buffer stability metric. 
In particular, we compare the throughput of a given online algorithm under the assumption
that this algorithm is run with a certain speed-up $s>1$, with the throughput of the best scheduling algorithm 
run without any speed-up. From technical perspective, computing of the relative throughput
under speed-up $s>1$ follows the same definitions as given above, with the only difference that the value
of $L_{\alg}(\cA,\cE,t)$ is calculated under assumption that $\alg$ transmits packets $s$ times faster.

\comment{

\dk{[[[Ponizsze skopiowane z pracy z sirocco - nie wiem czy potrzebne]]]}
Finally, we consider {\em work conserving} online scheduling algorithms, in the following sense:
as long as there are pending packets, the sender does not cease to schedule packets.
Note that it does not make any difference whether one assumes that offline algorithms are work-conserving or not, since their throughput is the same in both cases (a work conserving offline algorithm always transmits, but stops the ongoing transmission as soon as an error occurs and then continues with the next packet). Hence for simplicity we do not assume offline algorithms to be work conserving.

}

\newcommand{\greedy}{Greedy}
\newcommand{\group}{\text{Transmit-group}}

\section{Packet Scheduling for $k$ packet lengths}\label{sec:no-speedup}

\comment{
In this section we present an algorithm which achieves optimal
relative throughput. First, a simpler version is presented which
achieves the optimal relative throughput $1/2$ provided $\ell_i/\ell_{i-1}\in\NAT$
for each $i\in[2,k]$.
Then, we generalize this solution to the case that the lengths of packets
are arbitrary and the relative throughput is given by an expression depending
on the lengths of packets. This result achieves the optimal relative throughput
as well.

\subsection{Packets lengths with divisibility property}
}

In this section we present
online algorithm,
which is optimal for any number of packet lengths $k\geq 2$.
First, for the ease of presentation, we present algorithm {\greedy} under assumption that
$\ell_i/\ell_{i-1}\in\NAT$ for $1<i<k$.
Later, in Section~\ref{s:mgreedy}, we
show how to remove this assumption by modifying algorithm {\greedy};
the resulted algorithm is called {\mgreedy}.

The main idea behind our algorithms is to keep transmitting as many
short packets as possible (shortest-first strategy), subject to some balancing constraints.
Observe that it is difficult for any offline algorithm {\off} to get advantage over any
online algorithm {\alg} when {\alg}
sends small packets. Thus, preference for small packets ensures that {\alg} can be competitive
against {\off}, as long as it has short packets. However, if {\off} transmits
large packets during transmission of small packets by {\alg}, it can afterwards
transmit small packets when {\alg} does not have any of them in its queue.
Simultaneously, when {\off} is transmitting small packets, {\adv} can generate errors preventing {\alg} from
successful transmission of large packets. Despite this disadvantage
of a greedy approach, we show that an appropriate implementation of
this strategy, using some balancing constraints,
provides an optimal solution with respect to relative throughput,
and thus against any optimal way of scheduling under occurring arrival and failure patterns.

Our specific modification of the greedy shortest-first strategy is based on sending
packets in groups, which altogether balance the length of the next larger packet.
We explain it in detail first for two types of packet lengths: $\lmin$ and $\lmax$.
If there are at least $\rho=\lmax/\lmin$ small packets
in the queue, the algorithm builds a {\em group} which consists of $\rho$ of them
and keeps sending them until all of them are transmitted successfully.
If there are less than $\rho$ small packets in the queue at the moment
when a transmission of a group is finished, a large packet is transmitted.
However, whenever there are at least $\rho$ small packets, the group of small
packets is formed, independently of the fact whether a transmission of a large
packet(s) is successful or not. This idea is then recursively applied for the case
when there are $k>2$ types of packets. A pseudo-code of our greedy
algorithm is presented as Algorithm~\ref{alg:greedy}, with its recursive
subroutine given as Algorithm~\ref{alg:group}.

\begin{algorithm}[h]
	\caption{{\greedy}}
	\label{alg:greedy}
	\begin{algorithmic}[1]
    \Loop
        \While{$\sum_{i=1}^k\ell_i n_i < \ell_k$} Stay {\em idle}
        \EndWhile
        \State Transmit-group$(k)$
    \EndLoop
    \end{algorithmic}
\end{algorithm}

\begin{algorithm}[h]
	\caption{Transmit-group$(j)$}
	\label{alg:group}
	\begin{algorithmic}[1]

    \Loop
        \If{$\sum_{i=1}^{j-1}\ell_i n_i\geq \ell_j$}
            \For{$a=1$ \textbf{to} $\ell_j/\ell_{j-1}$}
                Transmit-group$(j-1)$
            \EndFor
            \Return
        \EndIf
            \State Transmit $\ell_j$; If the transmission is successful: \Return 

    \EndLoop
    \end{algorithmic}
\end{algorithm}


In the pseudo-codes, $n_i$ denotes the number of packets $\ell_i$ which are
currently (at the moment) waiting in the queue for transmission.

\paragraph{Performance analysis of algorithm {\greedy}.}

For the sake of analysis of algorithm {\greedy}, we introduce some
new notations. First, let us assume that an arrival pattern and an injection
pattern are chosen arbitrarily and are fixed, so we could omit them from formulas in the further analysis.
For an algorithm $A$, let $q_A(i,t)$ denote the sum of lengths
of $\ell_i$-packets in the queue of $A$ at the moment $t$.
That is, $q_A(i,t)=n_i\cdot \ell_i$ for a fixed time $t$.
Moreover, let
$q_A(<i,t)=\sum_{j<i}q_A(j,t)$ and we define $q_A(\leq i, t)$ analogously.
Let $L_A(i,t)$ denote the length of packets $\ell_i$ successfully transmitted
by time $t$. For a time period $\tau=[t_1,t_2]$, let $L_A(i,\tau)=L_A(i,t_2)-L_A(i,t_1)$.
That is, $L_A(i,\tau)$ denotes the number of $\ell_i$-packets successfully transmitted
in the interval $\tau$.
The notions $L_A(<i,t)$, $L_A(\leq i,t)$, $L_A(<i,\tau)$, and $L_A(\leq i,\tau)$
for time $t$ and time interval $\tau$ are defined analogously to
$q_A(<i,t)$, $q_A(\leq i,t)$, $q_A(<i,\tau)$ and $q_A(\leq i,\tau)$.
We also use the above introduced notations without the first argument, i.e.,
$q_A(t)$, $q_A(\tau)$, $L_A(t)$, and $L_A(\tau)$, which are shorthands for
$q_A(\leq k, t)$, $q_A(\leq k, \tau)$, $L_A(\leq k, t)$ and $L_A(\leq k, \tau)$,
respectively.

An algorithm $A$ is {\em busy} at time $t$ if it is transmitting a packet at $t$,
it has just finished a successful transmission, or its transmission is jammed by
an error at $t$. Otherwise $A$ is {\em idle} at $t$.

Our goal is to compare progress in sending packets of our algorithm {\greedy} and
an algorithm achieving the optimal throughput, denoted as {\off}. We say that
an algorithm $A$ is {\em $m$-busy} in a time period $\tau=[t_1,t_2]$ if the following conditions
are satisfied:
\begin{enumerate}
\item
$A$ is busy at each time $t\in\tau$;
\item
$A$ does not transmit packets $\ell_i$ for $i>m$ during $\tau$;
\item
$q_A(i,t_1)\geq q_{\off}(i,t_1)$ for each $i\in[m]$.
(That is, at time $t_1$ $A$ has no less
 packets of length $\ell_i$ in its queue than {\off}, for each $i\leq m$.)
\end{enumerate}

Now, we prove technical results regarding periods in which {\greedy} is $m$-busy for some $m\in[k]$.
These lemmas eventually lead to the proof of the fact that relative throughput
of {\greedy} is $1/2$ (provided $\ell_i/\ell_{i-1}\in\NAT$ for $i\in[2,k]$), which is optimal.
First, we make an observation that, if {\greedy} does not use packets longer than
$\ell_m$ for $m\in[k]$, then the total length of packets transmitted by {\greedy} is
at least as large as the total length of packets of length at least $\ell_m$ transmitted by {\off}.

\begin{lemma}\label{lem:k1}
Assume that {\greedy} is $m$-busy in a time period $\tau$, $m\leq k$.
Then, $L_{\greedy}(\tau)\geq L_{\off}(\geq m, \tau)-\ell_k$.
\end{lemma}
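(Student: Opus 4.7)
The plan is to focus on the set $B\subseteq\tau$ of time instants at which {\off} is in the middle of successfully transmitting an $\ell_i$-packet with $i\ge m$. By construction $|B|$ equals $L_{\off}(\ge m,\tau)$ up to a boundary term of at most~$\ell_k$ coming from an {\off}-packet that may straddle an endpoint of~$\tau$. Since every successful {\off}-transmission is jam-free, no jamming error lies in the interior of~$B$, and $B$ decomposes into maximal intervals $I_1=[s_1,e_1],\ldots,I_r=[s_r,e_r]$, each of length at least~$\ell_m$. Let $\bar B=\tau\setminus B$.

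Next I would quantify {\greedy}'s successful transmissions inside each $I_j$. By condition~(1) of the definition of $m$-busy, {\greedy} is already mid-transmission at the instant $s_j$; call that packet $P$ and its start time $b\le s_j$. Since {\greedy} is still transmitting $P$ at $s_j$, no jam occurred in $(b,s_j]$; combined with the fact that $\ell(P)\le\ell_m\le|I_j|$ and that $I_j$ is jam-free, $P$ completes successfully at $b+\ell(P)\le e_j$. By condition~(2) of $m$-busy, {\greedy} then transmits packets of length at most~$\ell_m$ back-to-back throughout the rest of $I_j$; each such packet that finishes inside $I_j$ succeeds. The only packet attributable to $I_j$ that can fail is the final one that starts inside $I_j$ but spills past $e_j$ and is jammed somewhere in the subsequent $\bar B$-gap.

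Summing per-interval contributions yields at least $|I_j|-\ell_m$ of successful {\greedy}-length per $I_j$, which together with the boundary adjustment between $|B|$ and $L_{\off}(\ge m,\tau)$ gives the rough bound $L_{\greedy}(\tau)\ge L_{\off}(\ge m,\tau)-r\ell_m-O(\ell_k)$. To sharpen this to the advertised slack of only~$\ell_k$, one has to amortise the per-interval losses: every spillover failure is caused by a jam lying in the following $\bar B$-gap, and that same jam is a witness to either an {\off}-failure or an {\off}-idle instant inside $\bar B$. Combining this coupling with the grouping structure of algorithm {\greedy}—which transmits short packets in groups whose total length equals the next larger packet size—and with the queue-dominance condition~(3) of $m$-busy (which guarantees that {\greedy} does not run out of short packets mid-interval), the per-interval losses telescope against $|\bar B|$ and collapse into a single accumulated slack of at most~$\ell_k$.

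The hardest step is exactly this telescoping. The rough per-interval estimate is essentially immediate, but converting $r\ell_m$ into a single constant~$\ell_k$ requires a careful charging scheme that pairs each spillover failure with the $\bar B$-time consumed by the corresponding jam, and exploits the alignment between {\greedy}'s fixed-size group transmissions and the jam-free windows that {\off}'s large successful transmissions carve out. Without such a coupling the lemma's additive slack would scale with the number $r$ of $B$-intervals rather than being a single largest-packet length, so correctly handling the boundary between consecutive $I_j$'s is the main technical point of the proof.
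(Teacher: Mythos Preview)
Your proposal has a genuine gap: the ``telescoping'' step you identify as the main technical point is neither carried out nor correct as sketched, and in fact it is not needed at all.

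The issue is that you work at the level of individual packets and obtain only $|I_j|-\ell_m$ per interval, leaving a deficit of $r\ell_m$ that you then try to amortise against $|\bar B|$. But $|\bar B|$ appears nowhere in the target inequality $L_{\greedy}(\tau)\ge L_{\off}(\ge m,\tau)-\ell_k$, so charging spillover failures to jams in $\bar B$ cannot close the gap. Your appeal to condition~(3) of $m$-busy is also a red herring: the lemma uses only conditions~(1) and~(2); queue dominance plays no role here.

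The paper's argument is much shorter because it works at the level of \emph{groups}, not packets. Since {\greedy} is $m$-busy, it is effectively executing Transmit-group$(m)$ calls throughout~$\tau$, and each completed call contributes exactly $\ell_m$ of successful transmission. In a jam-free window of length $\ell_i$ (with $i\ge m$, hence $\ell_i$ a multiple of $\ell_m$ by the divisibility hypothesis), {\greedy} completes at least $\ell_i/\ell_m$ groups: the group in progress at the start of the window has at most $\ell_m$ of work remaining, and each subsequent group takes exactly $\ell_m$ time. Thus the per-window contribution is already $\ell_i$, not $\ell_i-\ell_m$, and there is no per-interval deficit to telescope. Assigning each completed group to the {\off} packet being transmitted when the group finishes is injective, so summing over all {\off} long packets whose transmission lies in $\tau$ gives $L_{\greedy}(\tau)\ge L_{\off}(\ge m,\tau)$ directly; the single $-\ell_k$ covers only the one {\off} packet that may straddle the left endpoint of~$\tau$.

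In short, the grouping structure is not an ingredient in a charging scheme---it \emph{is} the whole proof, because it eliminates the per-interval loss outright.
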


\begin{proof}
Consider any packet $\ell_i$ for $i\geq m$ successfully transmitted
by {\off} in the period $[t,t+\ell_i]\subseteq\tau$. According to the
assumptions, {\greedy}
does not send any packets $\ell_j$ for $j>m$ in $\tau$ and it is busy
at each time $t'\in \tau$. Therefore, {\greedy} finishes transmissions
of $\ell_i/\ell_m$ groups of packets of length $\ell_m$ in the period
$[t,t+\ell_i]$. Thus, by assigning each group $G$ of packets of length $\ell_m$
transmitted by {\greedy} to a packet transmitted by {\off} at the moment
when the last packet of $G$ is finished, we obtain the result of the
lemma. The ``$-\ell_k$'' reduction in the formula is needed in order to take into account the packet (if any)
which {\off} started transmitting before $\tau$ and finished in $\tau$.
\end{proof}
Next, we formulate a relationship between the length of packets transmitted
by {\greedy} and {\off} up to the moment when {\greedy} is transmitting the
longest packet used by itself in the computation.
\begin{lemma}\label{lem:k2}
Assume that {\greedy} is $m$-busy in a time period $\tau=[t_1,t_2]$, $m\leq k$.
Let $t\in\tau$ be any time at which {\greedy} starts transmitting
$\ell_m$. Then,
$$2L_{\greedy}([t_1,t])\geq L_{\off}([t_1,t])+q_{\off}(<m,t)-\ell_m-\ell_k.$$
\end{lemma}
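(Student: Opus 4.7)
The plan is to decompose {\off}'s transmissions in $[t_1,t]$ into those of length at least $\ell_m$ (bounded via Lemma~\ref{lem:k1}) and those of length less than $\ell_m$ (bounded via a queue-conservation argument), and then combine the two bounds using the trivial fact $L_{\greedy}([t_1,t])\ge L_{\greedy}(<m,[t_1,t])$.

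First I would observe that $[t_1,t]$ is itself an $m$-busy period for {\greedy}: busyness (condition~1) and the absence of transmissions of packets $\ell_i$ for $i>m$ (condition~2) are inherited from the larger period $\tau\supseteq [t_1,t]$, while the queue-dominance (condition~3), being a statement about the left endpoint, holds at $t_1$ by the hypothesis on $\tau$. Applying Lemma~\ref{lem:k1} to this subperiod yields
\[
L_{\greedy}([t_1,t])\ge L_{\off}(\ge m,[t_1,t])-\ell_k.
\]

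Next I would handle packets shorter than $\ell_m$ via a flow-conservation identity. Both algorithms receive the same arrivals, so if $A$ denotes the total length of packets of length $<\ell_m$ arriving during $[t_1,t]$, then for every algorithm $X$ we have
\[
L_X(<m,[t_1,t]) = q_X(<m,t_1)+A-q_X(<m,t).
\]
Subtracting the instances $X=\greedy$ and $X=\off$, using $q_{\greedy}(<m,t_1)\ge q_{\off}(<m,t_1)$ (obtained by summing condition~3 of $m$-busy over $i<m$) together with $q_{\greedy}(<m,t)<\ell_m$ (which holds because {\greedy} chose to transmit an $\ell_m$-packet at time $t$, so the guard $\sum_{i=1}^{m-1}\ell_i n_i\ge \ell_m$ in Transmit-group$(m)$ must have failed), I would obtain
\[
L_{\greedy}(<m,[t_1,t])\ge L_{\off}(<m,[t_1,t])+q_{\off}(<m,t)-\ell_m.
\]

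Finally, since {\greedy} transmits no packet longer than $\ell_m$ in $[t_1,t]$, one has $L_{\greedy}([t_1,t])\ge L_{\greedy}(<m,[t_1,t])$; adding this strengthened short-packet bound to the Lemma~\ref{lem:k1} bound produces exactly the desired inequality. I expect the only real subtlety to be the queue-conservation step: one must be careful that the $-\ell_m$ error term arises because {\greedy}'s short-packet queue at time $t$ is not necessarily empty but only strictly below $\ell_m$, and that the initial-queue dominance for the aggregate $q(<m,\cdot)$ must be derived by summing condition~3 coordinate-wise rather than assumed directly.
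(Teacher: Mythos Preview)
Your proof is correct and rests on the same underlying decomposition as the paper's: bound OFF's large packets ($\ge\ell_m$) by the group-covering argument and OFF's small packets ($<\ell_m$) by comparing queues, then add. The paper presents this as a single charging scheme in which every packet $p$ transmitted by {\greedy} is simultaneously associated to (a) OFF's transmission of $p$ itself and (b) the large OFF packet being sent when $p$'s group completes, so that $2L_{\greedy}$ pays for both contributions; the $q_{\off}(<m,t)-\ell_m$ term then emerges from the packets {\greedy} sent but OFF did not, using $q_{\greedy}(<m,t)<\ell_m$. Your version factors the two halves cleanly through Lemma~\ref{lem:k1} and an explicit flow-conservation identity, which is arguably more transparent and makes the origin of each error term ($-\ell_k$ from Lemma~\ref{lem:k1}, $-\ell_m$ from the residual short queue) immediate; the paper's unified charging is terser but less self-contained. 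Substantively the arguments coincide.
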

\begin{proof}
The idea is that each packet $p$ successfully transmitted by {\greedy}
is associated to:
\begin{enumerate}
\item[(a)]
transmission of $p$ by {\off};
\item[(b)]
transmission of a packet $\ell_i$ for $i\geq m$ by {\off} which lasted while the
group $G$ of length $\ell_i$ containing $p$ was finished by {\greedy}.
\end{enumerate}
This association guarantees that:
\begin{itemize}
\item
each packet $\ell_i$, for $i>m$, transmitted by {\off} has an association (of type (b)) with a group
of packets of lengths $\ell_i$ transmitted by {\greedy};
\item
each packet $\ell_i$, for $i\leq m$, transmitted by {\off} has an association (of type (a)) with
its transmission by {\greedy}, provided {\greedy} transmitted this packet successfully as well.
\end{itemize}
On the other hand, each packet $p$ successfully transmitted by {\greedy} corresponds to
successful transmissions of {\off} with length at most twice the length of $p$.
However, packets which are successfully transmitted by {\greedy} in $[t_1,t]$ and
are not transmitted by {\off} in $[t_1,t]$ ``pay'' for transmissions of {\off} only once, i.e., are associated
to a packet $\ell_i$, for $i> m$, transmitted by {\off} but not to transmissions of themselves by {\off}.
As {\greedy} tries transmitting $\ell_m$ at time $t$ only in the case when $q_{\greedy}(<m,t)<\ell_m$,
the claimed result holds.

The ``$-\ell_k$'' reduction in the formula is needed in order to take into account the packet
which {\off} started transmitting before $\tau$ and finished in $\tau$.
\end{proof}
Using previous lemmas, we prove by induction a relationship between $L_{\greedy}(\tau)$
and $L_{\off}(\tau)$ for periods $\tau$ which are $m$-busy for {\greedy}, where $m\in[k]$.

\begin{lemma}\label{lem:k3}
Assume that {\greedy} is $m$-busy in a time period $\tau$, for $m\leq k$.
Then,
$$2L_{\greedy}(\tau)\geq L_{\off}(\tau)-f_m$$ 
where $f_m$ satisfies the relationships
$f_1= \ell_k$ and $f_{i+1}=f_i+3\ell_{i+1}+\ell_i+\ell_k$ for $i\in[1,k-1]$.
\end{lemma}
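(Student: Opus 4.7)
My plan is to induct on $m$. The base case $m=1$ follows directly from Lemma~\ref{lem:k1}: a $1$-busy period contains only transmissions of $\ell_1$ packets by {\greedy}, so Lemma~\ref{lem:k1} applied at level $1$ yields $L_{\greedy}(\tau)\geq L_{\off}(\geq 1,\tau)-\ell_k=L_{\off}(\tau)-\ell_k$, and since $L_{\greedy}(\tau)\ge 0$, doubling the left-hand side gives the claim with $f_1=\ell_k$.

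For the inductive step from $m$ to $m+1$, take an $(m+1)$-busy period $\tau=[t_1,t_2]$. If {\greedy} never transmits an $\ell_{m+1}$-packet in $\tau$, then $\tau$ is also $m$-busy (clauses (1) and (2) of $m$-busy hold immediately, and clause (3) for $i\le m$ is inherited from the $(m+1)$-busy hypothesis), so the induction hypothesis together with $f_m\le f_{m+1}$ finishes the case. Otherwise, let $t^*$ be the last moment in $\tau$ at which {\greedy} starts an $\ell_{m+1}$-attempt and let $t'=\min(t^*+\ell_{m+1},t_2)$. Partition $\tau$ into $[t_1,t^*]$, $[t^*,t']$, and $[t',t_2]$. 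On the prefix, apply Lemma~\ref{lem:k2} at level $m+1$ to obtain
\[
2L_{\greedy}([t_1,t^*])\;\geq\;L_{\off}([t_1,t^*])+q_{\off}(<m+1,t^*)-\ell_{m+1}-\ell_k.
\]
On the short middle interval $[t^*,t']$ of length at most $\ell_{m+1}$, use the crude bounds $L_{\greedy}([t^*,t'])\ge 0$ and $L_{\off}([t^*,t'])\le\ell_{m+1}+\ell_k$. On the suffix $[t',t_2]$, by the choice of $t^*$, {\greedy} transmits only packets of length at most $\ell_m$ while remaining busy; the argument inside the proof of Lemma~\ref{lem:k1} (which uses only busyness and the no-large-packet clause, not clause (3) of the $m$-busy definition) applies at level $m+1$ and yields $L_{\greedy}([t',t_2])\geq L_{\off}(\geq m+1,[t',t_2])-\ell_k$, taking care of the large part of {\off}'s suffix transmissions.

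The delicate piece is then bounding $L_{\off}(<m+1,[t',t_2])$, the small packets transmitted by {\off} in the suffix. The plan is to bound this by $q_{\off}(<m+1,t^*)+A(<m+1,[t^*,t_2])$ using packet-conservation between $t^*$ and $t_2$, and then to absorb the arrivals term into {\greedy}'s own small-packet throughput in $[t',t_2]$: since {\greedy} is busy, always prefers small packets, and never starts $\ell_{m+1}$ in the suffix, any small arrival either contributes to $L_{\greedy}$ or remains in its queue at $t_2$. The key algebraic ingredient is that the positive term $q_{\off}(<m+1,t^*)$ produced by Lemma~\ref{lem:k2} cancels the corresponding dominant term in the small-suffix bound. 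Summing the three contributions and absorbing the various $\ell_{m+1}$, $\ell_m$, and $\ell_k$ losses gives the claim with $f_{m+1}=f_m+3\ell_{m+1}+\ell_m+\ell_k$, where $3\ell_{m+1}$ records the Lemma~\ref{lem:k2} slack, the single-attempt middle interval, and a possible partial level-$m$ group straddling $t'$; $\ell_m$ records an additional partial level-$m$ group; and $\ell_k$ is the usual straddling-packet overhead.

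I expect the main obstacle to be this arrivals-absorption step in the suffix, because $[t',t_2]$ begins with $q_{\greedy}(<m+1,t^*)<\ell_{m+1}$ — precisely the reason {\greedy} chose $\ell_{m+1}$ at $t^*$ — so {\greedy}'s small queue is depleted exactly when we want it to match {\off}'s. Aligning the small arrivals and the depleted queue against {\greedy}'s suffix throughput, while avoiding any double-counting with $L_{\off}(\geq m+1,[t',t_2])$ already bounded by Lemma~\ref{lem:k1}, is what forces the $3\ell_{m+1}+\ell_m$ slack per induction step and determines the recurrence for $f_{m+1}$.
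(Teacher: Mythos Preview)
Your overall decomposition (prefix via Lemma~\ref{lem:k2}, short middle, suffix) matches the paper, but your treatment of the suffix has a genuine gap, and it is precisely the one you flag yourself without resolving.

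In the main case you never invoke the induction hypothesis. On the suffix $[t',t_2]$ you use only the Lemma~\ref{lem:k1} argument for $L_{\off}(\geq m{+}1,\cdot)$ and a packet-conservation/absorption argument for $L_{\off}(<m{+}1,\cdot)$. But the absorption step does not close: writing arrivals as $A(<m{+}1,[t^*,t_2])=L_{\greedy}(<m{+}1,[t^*,t_2])+q_{\greedy}(<m{+}1,t_2)-q_{\greedy}(<m{+}1,t^*)$, the term $q_{\greedy}(<m{+}1,t_2)$ is not bounded by any constant depending only on $\ell_1,\dots,\ell_k$ (nothing in the $(m{+}1)$-busy hypothesis controls the short queue at the right endpoint of $\tau$). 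So you cannot conclude $L_{\off}(<m{+}1,[t',t_2])\le q_{\off}(<m{+}1,t^*)+L_{\greedy}([t',t_2])+O(1)$. Relatedly, since the induction hypothesis is never applied in this branch, there is no place for the $f_m$ summand to enter your final inequality, yet you report the recurrence $f_{m+1}=f_m+3\ell_{m+1}+\ell_m+\ell_k$; that is a symptom of the gap.

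The paper closes exactly this hole by a device you are missing: it replaces {\off} on $\tau_2\cup\tau_3$ by an auxiliary schedule {\off'} that starts $\tau_2$ with \emph{empty} queues of lengths $<\ell_{m+1}$ (staying idle whenever {\off} would have used one of those old short packets). Because {\greedy} attempted only a single $\ell_{m+1}$ during $\tau_2$, at time $t'$ one has $q_{\greedy}(i,t')\ge q_{\off'}(i,t')$ for all $i\le m$, so $\tau_3$ is $m$-busy for {\greedy} against {\off'}, and the induction hypothesis yields $2L_{\greedy}(\tau_3)\ge L_{\off'}(\tau_3)-f_m$. The passage from {\off'} back to {\off} costs at most $q_{\off}(<m{+}1,t)$, which is exactly the positive term produced by Lemma~\ref{lem:k2}; this is the cancellation that makes the recurrence $f_{m+1}=f_m+3\ell_{m+1}+\ell_m+\ell_k$ come out. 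Your plan should be modified to introduce such an {\off'} and apply the induction hypothesis on the suffix rather than attempt a direct arrivals-absorption bound.
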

\begin{proof}
The proof goes by induction with respect to $m$. For $m=1$ the result is
an immediate consequence of Lemma~\ref{lem:k1}.

For the inductive step, assume that the result holds for some $m<k$. We will show the correctness of the
result for the case when the longest packet sent by {\greedy} in $\tau$ is $\ell_{m+1}$.
We split $\tau$ into three subintervals:
\begin{itemize}
\item
$\tau_1$ from the beginning of $\tau$ to time $t$ at which {\greedy} starts (an attempt to) transmitting
$\ell_{m+1}$ for the last time during $\tau$;
\item
$\tau_2$ from $t$ to $t'\in\tau$ such that either {\greedy} finishes a successful transmission of $\ell_{m+1}$
at $t'$ or it gives up scheduling packets $\ell_{m+1}$ at $t'$ (since it has enough shorter packets in the queue
at $t'$ to cover the length $\ell_{m+1}$);
\item
$\tau_3$ from $t'$ to the end of $\tau$.
\end{itemize}
Lemma~\ref{lem:k2} implies that
$$2L_{\greedy}(\tau_1)\geq L_{\off}(\tau_1)+q_{\off}(<m+1,t)-\ell_{m+1}-\ell_k,$$
where $t$ is the moment when $\tau_1$ ends.
Consider {\off'} which acts in $\tau_2$ and $\tau_3$ as {\off}, however: it starts
$\tau_2$ without packets of length $\ell_i$ for each $i< m+1$, and it stays idle each time {\off} is transmitting
a packet which was in its queue at the beginning of $\tau_2$ and therefore it was not in the queue of {\off'}.

Note that {\greedy} finishes an attempt to transmit a packet $\ell_{m+1}$ not later than at
the moment when it successfully transmits packet $\ell_{m+1}$ or new packets shorter than
$\ell_{m+1}$ of overall length at least $\ell_{m+1}$ are inserted in the queue and error
occurs. Observe also that {\off'} starts time interval $\tau_2$ with an empty queue and it cannot finish transmitting
a packet $\ell_i$, for $i>m+1$, in $\tau_2$ (if a time period of length $\ell_{m+1}$ without error
occurs, $\tau_2$ is finished by its definition).
Thus,
$$L_{\greedy}(\tau_2)\geq L_{\off'}(\tau_2)-2\ell_{m+1} \ ,$$
since new packets inserted during $\tau_2$ and transmitted by {\off'} have length $<\ell_{m+1}$:
\begin{itemize}
\item
packets of length smaller than $\ell_{m+1}$ are inserted until the beginning of the last attempt
to send $\ell_{m+1}$ by {\greedy} in~$\tau_2$;
\item
packets of length at most $\ell_{m+1}$ are successfully transmitted by {\off'} during
the last attempt to send $\ell_{m+1}$ by {\greedy}, since this attempt takes a time
period of length at most $\ell_{m+1}$.
\end{itemize}
At the beginning of $\tau_3$, it holds that $q_{\greedy}(i,t')\geq q_{\off'}(i,t')$
for $i\leq m$, since {\greedy} attempted only transmitting one copy of $\ell_{m+1}$
during $\tau_2$.

Therefore, the inductive hypothesis apply to the period $\tau_3$ for the largest
packet $\ell_m$ and {\off'} in place of {\off}:
$$2L_{\greedy}(\tau_3)\geq L_{\off'}(\tau_3)-f_m \ .$$ 
Next, recall that {\off'} differs from {\off} only such that it does
not transmit packets of lengths smaller than $\ell_{m+1}$, which appear in the queue of {\off}
at time $t$, of total length $q_{\off'}(<m+1,t)$. Thus,
$$L_{\off'}(\tau_2\cup\tau_3)\geq L_{\off}(\tau_2\cup\tau_3)-q_{\off}(<m+1,t) \ .$$
All these inequalities summed up and combined with the fact that
$L_{\greedy}(\tau)=L_{\greedy}(\tau_1)+L_{\greedy}(\tau_2\cup\tau_3)$
yield:
$$\begin{array}{rcl}
2L_{\greedy}(\tau)&\geq& L_{\off}(\tau_1)+(q_{\off}(<m+1,t)-\ell_{m+1}\\
&& -\ell_k)+ L_{\off'}(\tau_2\cup\tau_3)-f_m-2\ell_{m+1}\\
&\geq& L_{\off}(\tau)-f_m-3\ell_{m+1}-\ell_m-\ell_k\\
&\geq& L_{\off}(\tau)-f_{m+1}
\ .
\end{array}$$
\end{proof}


\begin{theorem}\label{th:kdivlower}
The relative throughput of {\greedy} is equal to $1/2$, provided $l_i/l_{i-1}\in\NAT$
for each $i\in[2,k]$.
\end{theorem}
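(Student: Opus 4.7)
The plan is to derive the lower bound $T_\greedy\geq 1/2$ from Lemma~\ref{lem:k3} (taken at $m=k$) by partitioning any execution into idle and busy segments of {\greedy} and comparing with {\off} segment by segment; the matching upper bound $T_\greedy\leq 1/2$ is inherited from the two-length result of~\cite{Anta-SIROCCO-2013} applied to the extreme lengths $\lmin,\lmax$, whose ratio is an integer by the divisibility hypothesis.

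First I would treat an \emph{idle} moment $t$ of {\greedy}: by the outer loop of Algorithm~\ref{alg:greedy}, $q_\greedy(t)<\ell_k$ throughout an idle period, and $q_\greedy(t)<2\ell_k$ even at the instant when an arrival triggers the transition to busy. Since both algorithms see the same arrival stream, $L_\greedy(t)+q_\greedy(t)=L_\off(t)+q_\off(t)$, so $L_\greedy(t)\geq L_\off(t)-2\ell_k$ at every such moment, which is far stronger than the target $L_\off(t)/2$.

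Next I would handle a maximal \emph{busy} period $[t_1,t_2]$ that starts right after an idle segment. The only obstruction to invoking Lemma~\ref{lem:k3} directly is the queue-domination hypothesis $q_\greedy(i,t_1)\geq q_\off(i,t_1)$, which may fail. I would sidestep this with a virtual reference algorithm {\off'} that mimics {\off} on $[t_1,t_2]$ but, at time $t_1$, discards exactly the surplus of {\off} over {\greedy}, so that $q_{\off'}(i,t_1)=\min\{q_\greedy(i,t_1),q_\off(i,t_1)\}$ for every $i$. Then {\greedy} is $k$-busy against {\off'}, Lemma~\ref{lem:k3} gives $2L_\greedy([t_1,t_2])\geq L_{\off'}([t_1,t_2])-f_k$, and since the total length of discarded packets is at most $q_\off(t_1)$, we have $L_\off([t_1,t_2])-L_{\off'}([t_1,t_2])\leq q_\off(t_1)$.

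Combining these bounds with the balance identity $q_\off(t_1)=L_\greedy(t_1)+q_\greedy(t_1)-L_\off(t_1)$, a short rearrangement yields $2L_\greedy(t_2)\geq L_\off(t_2)-q_\greedy(t_1)-f_k\geq L_\off(t_2)-(2\ell_k+f_k)$. Applying the same derivation to any prefix $[t_1,t]$ (on which {\greedy} is still $k$-busy against {\off'}) extends the bound to every $t$ inside the busy period, and combined with the idle-moment bound it gives $2L_\greedy(t)\geq L_\off(t)-C$ for a constant $C$ depending only on $\ell_1,\ldots,\ell_k$ and for all $t$. Dividing by $L_\off(t)\to\infty$ yields $T_\greedy\geq 1/2$, finishing the proof. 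The main subtlety is justifying the passage to {\off'}; this is easy because Lemmas~\ref{lem:k1}--\ref{lem:k3} treat {\off} as an arbitrary offline algorithm and nowhere exploit its optimality.
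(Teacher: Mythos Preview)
Your proof is correct and follows essentially the same approach as the paper: reduce to Lemma~\ref{lem:k3} on busy segments and use the near-empty queue at idle moments to control the accounting. The only presentational difference is in how the queue-domination hypothesis of Lemma~\ref{lem:k3} is arranged at the start of a busy segment: the paper argues that one may let {\off} instantaneously flush its entire queue at the start of each idle period (which can only help {\off}), after which per-type domination by {\greedy} holds automatically at the next busy start; you instead construct {\off'} by discarding {\off}'s surplus at the busy start and track the discarded mass via the balance identity $L_\greedy+q_\greedy=L_\off+q_\off$. Both devices accomplish the same thing, and your version makes the additive constant fully explicit. Your inclusion of the upper bound via the two-length result on $\lmin,\lmax$ is what the paper defers to the subsequent Corollary.
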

\begin{proof}
Lemma~\ref{lem:k3} implies that the relative throughput gets arbitrarily close to $1/2$
on sufficiently long time intervals in which {\greedy} is busy and {\off} starts with
the queue containing smaller or equal number of packets of each size.
On the other hand, {\greedy} gets idle only in the case when its queue is almost empty,
i.e., contains packets of length $<\ell_k$, which means that its relative throughput is
arbitrarily close to $1$ in such a moment $t$, provided $t$ is large enough.

The theorem holds by combining these two observations:
\begin{itemize}
\item
It holds at the moments when {\greedy} is idle.
\item
We can assume that {\off} has empty queues at the moments
when {\greedy} starts being idle --- this assumption does not improve relative
throughput of {\greedy} (we allow {\off} to transmit all packets from
its queue immediately in a period of length $0$). Thus, the assumption of Lemma~\ref{lem:k3} is satisfied
when {\greedy} starts transmitting after being idle. Therefore, the relative throughput
of {\greedy} gets arbitrary close to $\frac12$ in sufficiently large periods in
which {\greedy} is not idle.
\end{itemize}
\end{proof}

\begin{corollary}
The algorithm {\greedy} achieves optimal relative throughput for packets'
lengths $\ell_1<\ldots<\ell_k$ such that $\ell_i/\ell_{i-1}\in\NAT$
for each $i\in[2,k]$.
\end{corollary}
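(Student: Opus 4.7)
The plan is to combine the lower bound established in Theorem~\ref{th:kdivlower} with a matching upper bound on the best achievable relative throughput in this setting. The theorem already shows that {\greedy} attains relative throughput exactly $1/2$ under pairwise divisibility, so the only remaining task is to prove that no online deterministic algorithm can exceed $1/2$ when $\ell_i/\ell_{i-1}\in\NAT$ for every $i\in[2,k]$.

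For the upper bound, I would first reduce to the two-length case: since pairwise divisibility implies $\rho=\lmax/\lmin\in\NAT$, it suffices to consider adversarial scenarios using only $\lmin$-packets and $\lmax$-packets, for which $\lfloor\rho\rfloor=\rho$ and hence the generic expression $\lfloor\rho\rfloor/(\rho+\lfloor\rho\rfloor)$ evaluates to exactly $1/2$. This lets me invoke the upper bound of~\cite{Anta-SIROCCO-2013}, proved precisely for the two-length adversarial setting, and read off $1/2$ as the limiting relative throughput.

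For self-containedness I would also sketch the adversarial construction underlying that bound. The adversary injects packets in bundles, each consisting of one $\lmax$-packet together with $\rho$ packets of length $\lmin$, spaced so the link stays saturated. Whenever the online algorithm schedules an $\lmin$-packet, the adversary lets it succeed and {\off} transmits the same packet, contributing ratio $1$ on that interval. Whenever the algorithm initiates an $\lmax$-packet, the adversary jams it just before completion, while {\off} uses the same $\lmax$-long window to deliver $\rho$ successive short packets from a bundle, contributing ratio $0$ against $\lmax$ worth of {\off}'s transmission. Averaging these contributions over a long horizon yields an asymptotic ratio of at most $1/2$.

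The main obstacle is ruling out the possibility that the algorithm evades this dichotomy by indefinitely refusing to attempt $\lmax$-packets. The key observation is that in that case the algorithm's queue of $\lmax$-packets grows without bound while {\off}, which always has the option of draining them when not jammed, transmits them one by one, driving $L_{\alg}/L_{\off}$ to $0$. A clean case split between "the algorithm eventually attempts long packets infinitely often" and "it does not" therefore handles both regimes, matches the $1/2$ lower bound from Theorem~\ref{th:kdivlower}, and establishes optimality.
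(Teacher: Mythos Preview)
Your proof is correct and follows essentially the same route as the paper: combine the lower bound from Theorem~\ref{th:kdivlower} with the two-length upper bound of~\cite{Anta-SIROCCO-2013}, noting that the adversary may restrict itself to only two packet lengths (so the $k$-length upper bound cannot exceed the two-length one, which equals $1/2$ under divisibility). The paper's own proof simply cites that result and stops; your additional self-contained sketch of the adversarial construction is extra, and its ``never attempts long'' branch is slightly loose---under the bundle arrival pattern you describe, $L_{\alg}/L_{\off}$ tends to $1/2$ rather than $0$ (ALG clears all short packets, OFF clears short and long), though $1/2$ is of course still enough for the claimed upper bound.
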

\begin{proof}
It is shown in \cite{Anta-SIROCCO-2013} that relative throughput of any online
algorithm for two types of packets is at most
$$\frac{\lceil \ell_2/\ell_1\rceil}{\lceil \ell_2/\ell_1\rceil+\ell_2/\ell_1}$$
which is equal to $\frac12$ when $\ell_2/\ell_1\in\NAT$. As an adversary can
decide to schedule merely two types of packets among available $k$ types,
Theorem~\ref{th:kdivlower} implies optimality of relative throughput
of {\greedy}.
\end{proof}

\subsection{Arbitrary lengths of packets}
\label{s:mgreedy}

In this section we discuss an application of the ideas behind the algorithm
{\greedy} to the general case, i.e., when the condition
$\ell_i/\ell_{i-1}\in\NAT$ is not satisfied.
Let $\rho_{i,j}=\ell_i/\ell_{j}$.
A natural generalization of {\greedy} is
that, instead of $\rho_i$ groups of packets of length $\ell_{i-1}$
on the $i$-th level of recursion, we choose $\lfloor \rho_{i,i-1}\rfloor$
groups of packets of length (as close as possible to) $\ell_{i-1}$ in order to ``cover'' $\ell_i$.
If the length of a group of packets on the $i$-th level of recursion
is not larger than $\ell_{i+1}$, we can apply the ideas of ``covering'' packets transmitted
by {\off} using groups of packets transmitted by {\greedy}.
If $k=2$, this approach gives an algorithm with relative
throughput
$\frac{\lfloor \rho_{2,1}\rfloor}{\lfloor \rho_{2,1}\rfloor+\rho_{2,1}}$,
which is optimal due to~\cite{Anta-SIROCCO-2013}.
This naturally generalizes to the following result.

\begin{theorem}\labell{th:nospeed:upper}
The relative throughput of any online scheduling algorithm is at most
$$\min_{1\le j < i \le k}\left\{\frac{\lfloor\rho_{i,j}\rfloor}{\lfloor\rho_{i,j}\rfloor+\rho_{i,j}} \right\} \ .$$
\end{theorem}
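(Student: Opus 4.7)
The plan is to reduce this multi-length upper bound to the two-length case already established in \cite{Anta-SIROCCO-2013}, which is the only nontrivial ingredient needed. Fix an arbitrary pair of indices $1\le j<i\le k$; I will exhibit an adversarial arrival pattern $\cA_{i,j}$ and error pattern $\cE_{i,j}$ against which every deterministic online scheduling algorithm has asymptotic relative throughput at most $\lfloor\rho_{i,j}\rfloor/(\lfloor\rho_{i,j}\rfloor+\rho_{i,j})$. Since the infimum in the definition of $T_{\alg}$ ranges over all $(\cA,\cE)$, taking the minimum of these bounds over all admissible pairs $(i,j)$ yields the claimed inequality.

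For the reduction, I restrict the adversary to inject only $\ell_i$- and $\ell_j$-packets. The online algorithm $\alg$ now faces an instance of the two-packet-length problem with small length $\lmin'=\ell_j$ and large length $\lmax'=\ell_i$, so that $\lmax'/\lmin'=\rho_{i,j}$. Since an online algorithm for $k$ packet lengths is in particular an online algorithm for this restricted two-length instance (it can only do worse, never better, than an algorithm tailored to two lengths, because the two-length instance is a subclass of the $k$-length model), the upper bound of \cite{Anta-SIROCCO-2013} applies verbatim and gives
\[
\lim_{t\to\infty} T_{\alg}(\cA_{i,j},\cE_{i,j},t) \;\le\; \frac{\lfloor\rho_{i,j}\rfloor}{\lfloor\rho_{i,j}\rfloor+\rho_{i,j}}\ .
\]
Here $(\cA_{i,j},\cE_{i,j})$ is precisely the adversarial schedule from \cite{Anta-SIROCCO-2013}: a saturated arrival stream containing both sizes together with an adaptive jamming pattern that interrupts every attempt to transmit an $\ell_i$-packet just before completion whenever $\alg$ schedules one, while leaving $\ell_j$-packets alone. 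The offline optimum fills idle $\ell_j$-intervals with $\ell_i$-packets, and a per-phase accounting of length $\ell_i$ compares $\lfloor\rho_{i,j}\rfloor\cdot\ell_j$ delivered by $\alg$ against $\lfloor\rho_{i,j}\rfloor\cdot\ell_j + \ell_i$ delivered by {\off}, yielding the stated fraction in the limit.

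Taking the infimum over $(\cA,\cE)$ in \eqref{eq:rel-thr} dominates the infimum over the subfamily $\{(\cA_{i,j},\cE_{i,j})\}_{j<i}$, hence
\[
T_{\alg} \;\le\; \inf_{j<i}\;\lim_{t\to\infty} T_{\alg}(\cA_{i,j},\cE_{i,j},t)\;\le\; \min_{1\le j<i\le k}\frac{\lfloor\rho_{i,j}\rfloor}{\lfloor\rho_{i,j}\rfloor+\rho_{i,j}}\ ,
\]
which is the desired bound. The only step that could be subtle is the claim that the two-length lower bound construction from \cite{Anta-SIROCCO-2013} is \emph{adaptive} in the right sense, i.e., it works against any online deterministic algorithm regardless of the algorithm's strategy when confronted with only two lengths; this is exactly how the cited adversary is defined, so the reduction goes through without modification. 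No new ideas beyond the reduction are required.
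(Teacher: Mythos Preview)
Your proposal is correct and follows essentially the same route as the paper: restrict the adversary to inject only packets of two lengths $\ell_i,\ell_j$ and invoke the two-length upper bound from \cite{Anta-SIROCCO-2013}, then minimize over all pairs. The paper's proof is a terse version of exactly this reduction; your additional sketch of the adversarial construction is not strictly needed (and its details are slightly imprecise), but since you defer to the cited reference for the actual strategy, the argument stands.
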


\begin{proof}
This result easily follows from Theorem~1 in \cite{Anta-SIROCCO-2013}. Indeed, if the adversary
schedules merely packets $\ell_i$ and $\ell_{j}$, for $i,j$ minimizing the expression
$\frac{\lfloor\rho_{i,j}\rfloor}{\lfloor\rho_{i,j}\rfloor+\rho_{i,j}}$, the strategy described in the
proof of Theorem~1 in \cite{Anta-SIROCCO-2013} gives our result.
\end{proof}
However, for $k>2$, the additional advantage of {\off} over {\greedy} following from
rounding on various levels of recursion can accumulate.
In order to limit this effect, instead of transmitting $\lfloor \ell_i/\ell_{i-1}\rfloor$ groups of packets
on the level $i-1$, we keep sending groups on the level $i-1$
as long as the sum of lengths of packets from the transmitted groups is not larger than
$\ell_i-\ell_{i-1}$.
This gives the following technical result.
(For simplifying the arguments in the remaining part of the analysis, let us denote $\rho_{i,i-1}=\ell_i/\ell_{i-1}$
by simply $\rho_i$, for $i\in [2,k]$.)

\begin{lemma}\labell{lem:nospeed:aux}
Consider such a modification of {\greedy} that $\group(j)$ keeps calling $\group(j-1)$,
for $j>1$, as long as the total length of transmitted packets in the current execution of $\group(j)$
is at most $\ell_j-\ell_{j-1}$.
The relative throughput of this algorithm
is at least
$\min_{i\in[2,k]}\left\{\frac{\rho_i-1}{2\rho_i-1} \right\} \ .$
\end{lemma}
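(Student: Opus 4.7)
The plan is to mirror the three-lemma structure used for the divisible case (Lemmas~\ref{lem:k1}--\ref{lem:k3} and Theorem~\ref{th:kdivlower}), adjusting the constants to reflect how the modified stopping rule of $\group(j)$ changes group lengths. The starting point is the structural observation, proved by induction on $j$, that every successful execution of the modified $\group(j)$ transmits a total length in the half-open interval $(\ell_j-\ell_{j-1},\ell_j]$. The upper bound uses that just before the final level-$(j-1)$ subcall the accumulated length is at most $\ell_j-\ell_{j-1}$ and, inductively, the final subcall adds at most $\ell_{j-1}$; the lower bound is the direct negation of the loop's stopping condition.

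Next, I would establish an analogue of Lemma~\ref{lem:k1}: whenever the modified {\greedy} is $m$-busy on a period $\tau$,
$L_{\greedy}(\tau)\geq \frac{\rho_m-1}{\rho_m}\,L_{\off}(\geq m,\tau)-O(\ell_k).$
The covering argument of Lemma~\ref{lem:k1} is re-used without structural change: for each $\ell_i$-packet ($i\geq m$) that {\off} successfully transmits in the error-free window $[t,t+\ell_i]\subseteq\tau$, the busy {\greedy} completes at least $\lfloor\ell_i/\ell_m\rfloor$ full level-$m$ groups during that window. The loss with respect to the divisible case comes from each ``covering'' group now having length only $\geq \ell_m-\ell_{m-1}$, which introduces precisely the factor $(\rho_m-1)/\rho_m=(\ell_m-\ell_{m-1})/\ell_m$ relative to the unit-ratio bound of Lemma~\ref{lem:k1}.

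The analogue of Lemma~\ref{lem:k2} is then obtained by replaying its (a)+(b) association argument, with side~(b) weakened by exactly the factor inherited from the Lemma~\ref{lem:k1} analogue while side~(a) is unchanged; the net effect is to replace the constant $2$ on the left-hand side of Lemma~\ref{lem:k2} with $(2\rho_m-1)/(\rho_m-1)$, giving an inequality of the form
$\tfrac{2\rho_m-1}{\rho_m-1}\,L_{\greedy}([t_1,t])\geq L_{\off}([t_1,t])+q_{\off}(<m,t)-O(\ell_k).$
The induction of Lemma~\ref{lem:k3} and the idle/busy decomposition of Theorem~\ref{th:kdivlower} then carry through verbatim with these new constants, yielding asymptotic relative throughput at least $(\rho_m-1)/(2\rho_m-1)$ at each level $m$ at which the induction is exercised. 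Since the adversary can concentrate arrivals on a pair of lengths $\ell_i,\ell_{i-1}$ that minimises $(\rho_i-1)/(2\rho_i-1)$ and thereby force the analysis to pay the worst per-level factor, the overall bound is the claimed $\min_{i\in[2,k]}(\rho_i-1)/(2\rho_i-1)$.

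The main technical obstacle is to ensure that the deteriorated coverage factor enters only through this final minimum over levels and does not compound multiplicatively across the recursion; this requires the inductive step in the Lemma~\ref{lem:k3} analogue to invoke the current-level factor alone when moving from level $m$ to $m+1$, and to absorb the per-step rounding and boundary corrections into an additive term $f_m$ that remains $O(\ell_k)$ per busy period, so that all additive losses vanish asymptotically in the definition of relative throughput from Equation~(\ref{eq:rel-thr}).
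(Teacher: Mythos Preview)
Your proposal is correct and follows essentially the same approach as the paper: relax the covering argument so that a level-$j$ group has length at least $\ell_j-\ell_{j-1}$ rather than exactly $\ell_j$, which replaces the constant $2$ in Lemmas~\ref{lem:k1}--\ref{lem:k3} by $1+\rho_m/(\rho_m-1)=(2\rho_m-1)/(\rho_m-1)$, and then replay Theorem~\ref{th:kdivlower}. One small remark: your sentence invoking the adversary ``concentrating arrivals on a pair $\ell_i,\ell_{i-1}$'' is not the reason the minimum appears in the \emph{lower} bound --- it appears because the induction in the Lemma~\ref{lem:k3} analogue must carry a single multiplicative constant that dominates the per-level factors at every level $\le m$, hence the worst (largest) factor, hence the $\min$ after inverting; the adversary argument would only show tightness, which is not what this lemma claims.
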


\begin{proof}
In Lemmas~\ref{lem:k1}, \ref{lem:k2} and \ref{lem:k3}, we repeatedly use an argument
that, if {\greedy} does not use packets of length $\ell_i$ for $i>m$, then each
such packet transmitted by {\off} corresponds to a group of (shorter) packets transmitted
by {\greedy} of total length $\ell_i$. This observation can be preserved for the modified
{\greedy} algorithm with a relaxation that a packet $\ell_i$ transmitted by {\off}
corresponds  to a group of packets transmitted
by {\greedy} of length at least $\ell_i-\ell_{i-1}$.
%
%
This relaxation translates inequalities from Lemmas~\ref{lem:k1}, \ref{lem:k2} and \ref{lem:k3} to:
\begin{equation*}
\begin{array}{rcl}
\frac{\rho_m}{\rho_m-1}\cdot L_{\greedy}(\tau)&\geq& L_{\off}(\geq m, \tau)-\ell_k\\
(1+\frac{\rho_m}{\rho_m-1})\cdot L_{\greedy}([t_1,t])&\geq & L_{\off}([t_1,t])+\\
&&q_{\off}(<m,t)-\ell_m-\ell_k\\
(1+\frac{\rho_m}{\rho_m-1})\cdot L_{\greedy}(\tau)&\geq & L_{\off}(\tau)-f_m
\end{array}
\end{equation*}
If we apply the above inequalities instead of those from Lemmas~\ref{lem:k1}, \ref{lem:k2} and \ref{lem:k3}
in the proof of Theorem~\ref{th:kdivlower}, we obtain the result claimed here. 
\end{proof}

However, as a group of packets transmitted by {\greedy} ``covering'' $\ell_i$ transmitted by {\off}
may contain packets of various lengths, the relative throughput of the solution from
Lemma~\ref{lem:nospeed:aux} is
difficult to compare with the upper bound from Theorem~\ref{th:nospeed:upper}.
In order to tackle this issue, we introduce yet another modification to the algorithm.

The main goal of this modification is to ensure that {\greedy} is transmitting packets of the same
length for long periods of time and it changes to other length only if it is necessary. An execution
of the algorithm is split into {\em stages}. In a stage, packets of total length (close to) $ck\ell_k$
are transmitted, where $c\in\NAT$ is a fixed large constant.
At the beginning of a stage, the set $C$ of candidates is determined as
$C=\{i\,|\, n_i\ell_i\geq ck\ell_k\}$.
Then, the {\em interesting} length $\ell_{i^\star}$ is set for parameter $i^\star=\min(C)$, and the algorithm
starts transmitting packets $l_{i^\star}$.
After each transmission, successful or not, the interesting length $i^\star$ is updated
to $i^\star\leftarrow \min(\{i^\star\}\cup\{i\,|\,\ell_i n_i\geq ck\ell_k\})$.
(Note that the set of candidates $\{i\,|\,\ell_i n_i\geq ck\ell_k\}$ may change over time,
as the adversary injects packets.)

Using the notion of the interesting length, we work in line with the original algorithm {\greedy},
with the following restrictions:
 \begin{itemize}
 \item
 no packet is transmitted as long as the interesting length is not determined (i.e., the set
 of candidates is empty);
 \item
 only a packet of length $l_{i^\star}$ can be transmitted.
 \end{itemize}

\begin{algorithm}[t]
	\caption{{\mgreedy}}
	\label{alg:mgreedy}
	\begin{algorithmic}[1]
    \State $C\gets \{i\,|\, n_i\ell_i\geq ck\ell_k\}$
    \Loop
        \While{$\{i\,|\,\ell_i n_i\geq ck\ell_k\}=\emptyset$} Stay {\em idle}
        \EndWhile
        \State $C\gets \{i\,|\,\ell_i n_i\geq ck\ell_k\}$
        \State $i^\star\gets \min(C)$
        \For{$a=1$ \textbf{to} $ck$}
                $\ell'\gets {\group}(j-1)$
                \State $\ell\gets \ell+\ell'$
        \EndFor
    \EndLoop
    \end{algorithmic}
\end{algorithm}
\begin{algorithm}[t]
	\caption{{\group}$(j)$}
	\label{alg:group2}
	\begin{algorithmic}[1]
    \State $\ell\gets 0$
    \While{$\ell\leq \ell_j-\ell_{i^\star}$}
        \If{$j>i^\star$}
                $\ell'\gets {\group}(j-1)$
                \State $\ell\gets \ell+\ell'$
        \Else
            \State Transmit $\ell_j$
            \State $C\gets C\cup \{i\,|\,\ell_i n_i\geq ck\ell_k\}$
            \State $i^\star\gets \min(C)$
            \State If a transmission of $\ell_j$ successful: $\ell\gets\ell_j$
        \EndIf
    \EndWhile
   \State \Return $\ell$
    \end{algorithmic}
\end{algorithm}
As the total length of packets staying in the queue whose lengths are not interesting
is at most $k\cdot ck\ell_k$, they do not have impact on the {\em asymptotic}
value of the relative throughput. Thus, assume that there are
no packets of lengths which are {\em not} interesting at each time $t$.
That is, there are no packets of lengths $\ell_i$ such that $i\not\in C$.
Then, the new algorithm {\mgreedy} works exactly as the original algorithm
{\greedy}. The pseudo-code of algorithm {\mgreedy} and the modified sub-routine
$\group(j)$, which now returns also some value $\ell$, are given as Algorithm~\ref{alg:mgreedy}
and Algorithm~\ref{alg:group2}, respectively.

\paragraph{Performance analysis of algorithm {\mgreedy}.}

We say that an execution of $\group(k)$ 
is {\em uniform} if the
algorithm transmits packets of a fixed length $\ell_i$ during that executions of $\group(k)$
as
well
as during the executions of $\group(k)$ directly preceding it.
A new key property of algorithm {\mgreedy} compared with {\greedy} is that
most of its executions of sub-routine $\group(k)$ are uniform.

\begin{proposition}\labell{prop:uniform}
At least $ck-2k$ calls of {\group} in a stage of {\mgreedy} are uniform.
\end{proposition}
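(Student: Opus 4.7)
My plan is to bound the number of non-uniform top-level calls of $\group$ within a stage by $2k-1$; the proposition then follows from $ck - (2k-1) \geq ck - 2k$.

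The critical structural fact is that within one stage, the interesting length index $i^\star$ is monotonically non-increasing over time. At the start of the stage, $i^\star\gets\min(C)$, and the only subsequent updates (performed inside $\group$ after every transmission at the base of the recursion) have the form $i^\star\gets\min(\{i^\star\}\cup\{i:\ell_i n_i\geq ck\ell_k\})$, which can only decrease $i^\star$. Since $i^\star$ takes values in $\{1,\ldots,k\}$, it changes value at most $k-1$ times per stage. Call a top-level invocation of $\group$ a \emph{transition call} if $i^\star$ changes value at least once during it; by the above, there are at most $k-1$ transition calls per stage. By direct inspection of the pseudocode, the base-level statement ``Transmit $\ell_j$'' is only reached when $j=i^\star$, so a non-transition call transmits only packets of the single length $\ell_{i^\star}$ that persists throughout, while a transition call mixes packets of at least two different lengths.

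I would then classify all non-uniform top-level calls of $\group$ into three (possibly overlapping) sets: (a) the transition calls themselves, which are non-uniform because they mix transmission lengths; (b) the calls immediately following a transition call, which are non-uniform because their preceding call has mixed transmission lengths; and (c) the very first call of the stage, which has no preceding call in the same stage and so is automatically excluded from uniformity. Sets (a) and (b) each contain at most $k-1$ calls, and (c) contributes at most one, for a total of at most $(k-1)+(k-1)+1=2k-1$ non-uniform calls per stage. Therefore at least $ck-(2k-1)\geq ck-2k$ top-level calls are uniform, as required.

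I do not anticipate a serious obstacle: once monotonicity of $i^\star$ within a stage is established and the transition/non-transition dichotomy is set up, the proof reduces to this simple counting argument. The only subtlety worth double-checking is that several changes of $i^\star$ could occur inside a single top-level call of $\group$; this, however, only reduces the number of transition calls and thus strengthens the bound rather than weakening it.
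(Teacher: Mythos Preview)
Your proposal is correct and follows essentially the same approach as the paper: the paper's proof also rests on the observation that $i^\star$ is non-increasing within a stage and hence changes at most $k-1$ times, and then declares the claim immediate. You have simply made the counting explicit by separating the transition calls, their immediate successors, and the first call of the stage; this yields the slightly sharper bound $ck-(2k-1)$, but otherwise the argument is identical in substance.
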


\begin{proof}
Observe that the value of $i^\star$ can only decrease during a stage and, as long
as $i^\star$ remains unchanged, only packets of length $\ell_{i^\star}$ are transmitted.
Therefore, the claim follows from the fact that $i^\star$ may change at most
$k-1$ times during a stage.
\end{proof}
Now, we evaluate the relative throughput of {\mgreedy}.

\begin{lemma}\label{lem:klower}
The relative throughput of the  {\mgreedy} algorithm is at least
$$\min_{1\le j < i \le k}\left\{\frac{\lfloor\rho_{i,j}\rfloor}{\lfloor\rho_{i,j}\rfloor+\rho_{i,j}}\right\}\cdot\frac1{1+4/(c\eta)}
\ ,$$
where $\eta$ is a constant depending merely on packets' lengths.
\end{lemma}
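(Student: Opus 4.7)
The plan is to strengthen the analysis of Lemma~\ref{lem:nospeed:aux} by exploiting the stage-based structure of {\mgreedy} together with Proposition~\ref{prop:uniform}. The crucial gain comes from uniform executions of {\group}, in which only packets of one length $\ell_{i^\star}$ are transmitted and the group built at level $j$ therefore contains exactly $\lfloor\rho_{j,i^\star}\rfloor$ such packets (total length $\lfloor\rho_{j,i^\star}\rfloor\ell_{i^\star}$). When such a group is used to ``cover'' a packet $\ell_j$ transmitted by {\off}, a single online $\ell_{i^\star}$-packet pays for at most $\ell_{i^\star}+\ell_j/\lfloor\rho_{j,i^\star}\rfloor=\ell_{i^\star}(1+\rho_{j,i^\star}/\lfloor\rho_{j,i^\star}\rfloor)$ of offline work (its own copy in {\off}'s schedule plus its share of the covered $\ell_j$). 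This replaces the factor $2$ appearing in Lemmas~\ref{lem:k1}--\ref{lem:k3} by $1+\rho_{j,i^\star}/\lfloor\rho_{j,i^\star}\rfloor$, whose reciprocal is exactly the target ratio $\lfloor\rho_{j,i^\star}\rfloor/(\lfloor\rho_{j,i^\star}\rfloor+\rho_{j,i^\star})$.

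Concretely, I would first reprove analogues of Lemmas~\ref{lem:k1}--\ref{lem:k3} restricted to $m$-busy periods composed entirely of uniform calls, replacing $2L_{\greedy}(\tau)\geq L_{\off}(\tau)-f_m$ by $\bigl(1+\rho_{m,i^\star}/\lfloor\rho_{m,i^\star}\rfloor\bigr)\,L_{\greedy}(\tau)\geq L_{\off}(\tau)-f_m'$, tracking the new additive terms carefully through the covering assignment. Carrying the induction on $m$ through the recursive structure and taking the worst case over all pairs encountered yields the target ratio $\min_{1\le j<i\le k}\{\lfloor\rho_{i,j}\rfloor/(\lfloor\rho_{i,j}\rfloor+\rho_{i,j})\}$ for the uniform portion of each stage. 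Since Proposition~\ref{prop:uniform} guarantees that at most $2k$ out of $ck$ calls of {\group} per stage are non-uniform, this tight ratio governs the overwhelming majority of transmissions.

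Finally, I would bound the remaining overhead: each of the at most $2k$ non-uniform calls per stage wastes at most $O(\ell_k)$ of transmission time, packets of non-interesting length sitting in the queue contribute $O(k^2\ell_k)$ of slack per stage (by the definition of the candidate set $C$), and idle-to-busy transitions add only a constant. Since a stage performs $\Theta(ck\ell_k)$ useful transmissions, these additive overheads translate into a multiplicative slowdown of the form $1/(1+4/(c\eta))$, where $\eta$ is a constant depending only on $\ell_1,\dots,\ell_k$ chosen so that the per-stage ratio of wasted to useful length is at most $4/(c\eta)$. The hardest part will be re-establishing the ``queue snapshot'' conditions that underlie the $m$-busy definition each time $i^\star$ decreases mid-stage, and showing that each such change costs only $O(\ell_k)$, so that the full per-stage overhead remains $O(k^2\ell_k)$ and can be absorbed cleanly into the multiplicative factor $1/(1+4/(c\eta))$ without degrading the tight per-pair ratio obtained in the uniform analysis.
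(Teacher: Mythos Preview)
Your overall strategy matches the paper's: rework Lemmas~\ref{lem:k1}--\ref{lem:k3} with the tighter factor $1+\rho_{i,j}/\lfloor\rho_{i,j}\rfloor$ on uniform calls, invoke Proposition~\ref{prop:uniform} to confine non-uniform calls to a $2/c$ fraction of each stage, and absorb the remainder into the $1/(1+4/(c\eta))$ multiplier. The paper proceeds in exactly this way, splitting $L_{\off}$ into the portion transmitted during uniform periods ($L_{\off,1}$) and the portion during non-uniform periods ($L_{\off,2}$), and bounding each separately.

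Where your sketch is loose is in the treatment of the non-uniform part. You say each non-uniform call ``wastes at most $O(\ell_k)$ of transmission time,'' but that bounds {\mgreedy}'s output during such a call, not {\off}'s. A non-uniform call to $\group(k)$ may span an arbitrarily long real-time interval (with many errors), during which {\off} can transmit far more than $\ell_k$; an additive per-call bound on the online side does not by itself yield a multiplicative degradation of the form $1/(1+4/(c\eta))$. What is actually needed is a \emph{ratio} bound that holds even on non-uniform intervals, and this is precisely what Lemma~\ref{lem:nospeed:aux} supplies: without any uniformity assumption the covering argument still gives $L_{\mgreedy}/L_{\off}\geq\eta$ with the concrete value $\eta=\min_{i\in[2,k]}\{(\rho_i-1)/(2\rho_i-1)\}$. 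The paper uses this to derive $L_{\off,2}(t)\leq (4/(c\eta))\,L_{\mgreedy}(t)$, from which the stated factor follows directly. In your sketch $\eta$ is introduced only circularly (``chosen so that the per-stage ratio of wasted to useful length is at most $4/(c\eta)$''); the missing ingredient is an explicit appeal to Lemma~\ref{lem:nospeed:aux} (or an equivalent non-uniform covering bound) to justify that such an $\eta$, depending only on the packet lengths, exists and does the job.
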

\begin{proof}
In Lemmas~\ref{lem:k1}, \ref{lem:k2} and \ref{lem:k3}, we repeatedly use an argument
that, if {\greedy} does not use packets of length $\ell_i$ for $i>m$, then each
such packet transmitted by {\off} corresponds to a group of (shorter) packets transmitted
by {\greedy} of the total length $\ell_i$, and this association is injective, provided {\greedy} is
not idle at that time.
For a while,
assume that each execution of {\group} in {\mgreedy} is uniform.
Then, the above property of {\greedy}
can be preserved for the
{\mgreedy} algorithm with a relaxation that a packet $p$ of length $\ell_i$ transmitted by {\off}
corresponds  to a group of $\lfloor\ell_i/\ell_j\rfloor$ packets of length $\ell_j$ transmitted
by {\mgreedy} for $j<i$. Actually, those are the packets whose successful transmission is finished during
the transmission of $p$. The fact that there are at least $\lfloor\ell_i/\ell_j\rfloor$ such packets
follows from the assumption that {\mgreedy} is not idle at that time, its executions of {\group}
are uniform, and the time period needed for transmission of $\lfloor\ell_i/\ell_j\rfloor$ packets $\ell_j$
is not larger than $\ell_i$.

Let $\gamma=\min_{1\le j < i \le k}\left\{\frac{\lfloor\rho_{i,j}\rfloor}{\lfloor\rho_{i,j}\rfloor+\rho_{i,j}} \right\}.$
Let $\delta_{i,j}=\frac{\rho_{i,j}}{\lfloor \rho_{i,j}\rfloor}$ and let $\delta=\max_{i>j}\{\delta_{i,j}\}$.
One can check that $\gamma=1/(1+\delta)$.
This relaxation translates inequalities from Lemmas~\ref{lem:k1}, \ref{lem:k2} and \ref{lem:k3} to:
\begin{equation*}
\begin{array}{rcl}
\delta\cdot L_{\greedy}(\tau)&\geq& L_{\off}(\geq m, \tau)-\ell_k\\
(1+\delta)\cdot L_{\greedy}([t_1,t])&\geq & L_{\off}([t_1,t])\\
&&+q_{\off}(<m,t)-\ell_m-\ell_k\\
(1+\delta)\cdot L_{\greedy}(\tau)&\geq & L_{\off}(\tau)-f_m
\end{array}
\end{equation*}
If we apply the above inequalities instead of those from Lemmas~\ref{lem:k1}, \ref{lem:k2} and \ref{lem:k3}
in the proof of Theorem~\ref{th:kdivlower}, we obtain the claimed result, even without the $\frac1{1+4/(c\eta)}$ factor,
since
$1/(1+\delta)=\min_{1\le j < i \le k}\left\{\frac{\lfloor\rho_{i,j}\rfloor}{\lfloor\rho_{i,j}\rfloor+\rho_{i,j}}\right\}$.

However, the above reasoning does not deal with the situation that an execution of {\group} is not uniform.
In such a case, we cannot associate $\lfloor\ell_i/\ell_j\rfloor$
packets $\ell_j$ to $\ell_i$ transmitted by {\off} in the period $T$ such that their transmissions by {\mgreedy} were finished in $T$.
Fortunately, by Proposition~\ref{prop:uniform}, only the fraction $\frac{2k}{ck}=\frac2{c}$ of calls of $\group(k)$
are not uniform. As argued earlier in the proof of Lemma~\ref{lem:nospeed:aux}, even without uniformity assumption
we have the bound $L_{\greedy}/L_{\off}\geq \eta$, where
$\eta=\min_{i\in[2,k]}\left\{\frac{\rho_i-1}{2\rho_i-1} \right\}.$

Let us split packets transmitted by {\off}
by time $t$
into those whose transmission was inside periods when {\mgreedy}
works in a uniform manner, denoted by $L_{\off,1}(t)$, and the remaining ones, denoted by $L_{\off,2}(t)$. Given
the fact that the fraction at least $(1-\frac2{c})$ of transmitted packets by {\mgreedy} are sent in uniform
way, we have the following
$$\begin{array}{rcl}
\lim_{t\to\infty}(1-2/c)\cdot L_{\mgreedy}(t)/L_{\off,1}(t) &\geq& \gamma \\
\lim_{t\to\infty}(2/c)\cdot L_{\mgreedy}(t)/L_{\off,2}(t) &\geq& \eta
\end{array}$$
The second bound implies that
$(2/c)\cdot L_{\mgreedy}(t)/L_{\off,2}(t)\geq \eta/2$ for each large enough $t$, and therefore
$L_{\off,2}(t)\leq 4/(c\cdot\eta)L_{\mgreedy}(t)$. This implies also
that either $L_{\mgreedy}(t)>L_{\off}(t)$ or $L_{\off,2}(t)\leq 4/(c\cdot\eta) L_{\off}(t)$ for sufficiently large $t$.
As the first condition implies that the relative throughput is close to $1$, assume that $L_{\off,2}(t)\leq 4/(c\cdot\eta) L_{\off}(t)$.
As $L_{\off}(t)=L_{\off,1}(t)+L_{\off,2}(t)$, this implies that
$$\lim_{t\to\infty} L_{\mgreedy}(t)/L_{\off}(t) \geq \gamma\cdot \frac{1}{1+4/(c\eta)}\ .$$
\end{proof}
As we can choose arbitrarily large $c$, Lemma~\ref{lem:klower} implies that the relative throughput of {\mgreedy} might
be arbitrarily close to the upper bound from Theorem~\ref{th:nospeed:upper}.
In the following theorem, we argue that one can modify {\mgreedy} such that
it gradually increases the constant $c$ during its execution, which guarantees
the optimal relative throughput.

\begin{theorem}\labell{th:nospeedup:final}
The optimal relative throughput of an online algorithm is equal to
$$\min_{1\le j < i \le k}\left\{\frac{\lfloor\rho_{i,j}\rfloor}{\lfloor\rho_{i,j}\rfloor+\rho_{i,j}} \right\} \ .$$
\end{theorem}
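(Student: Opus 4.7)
The upper bound is already delivered by Theorem~\ref{th:nospeed:upper}, so the only remaining task is to exhibit an online algorithm whose relative throughput matches $\gamma := \min_{1\le j < i \le k}\{\lfloor\rho_{i,j}\rfloor/(\lfloor\rho_{i,j}\rfloor+\rho_{i,j})\}$. The natural plan is to amplify {\mgreedy} so that the parameter $c$ is no longer a fixed constant but grows to infinity over the course of execution: Lemma~\ref{lem:klower} already guarantees throughput at least $\gamma/(1+4/(c\eta))$ for every fixed $c$, and the product factor $1/(1+4/(c\eta))$ tends to $1$ as $c\to\infty$.

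Concretely, I would define {\mgreedy}$^\star$ to operate in phases indexed by $r=1,2,\ldots$, where phase $r$ runs {\mgreedy} with constant $c_r$ for a duration $T_r$. The sequence $(c_r)$ is any fixed sequence of positive integers tending to infinity (say $c_r = r$), and the lengths $T_r$ are chosen inductively so that two conditions hold: first, $T_r$ is large enough that within the phase the finite-time ratio of {\mgreedy}'s transmitted length to {\off}'s transmitted length is within $1/r$ of the guarantee $\gamma/(1+4/(c_r\eta))$ supplied by Lemma~\ref{lem:klower}; second, $T_r$ is at least $r$ times larger than the total duration $T_1+\cdots+T_{r-1}$, so that the contribution of earlier phases to both $L_{\text{alg}}$ and $L_{\off}$ is asymptotically negligible.

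Having set this up, the proof that $\lim_{t\to\infty} L_{\mgreedy^\star}(t)/L_{\off}(t) \ge \gamma$ should follow from a routine averaging argument: for every $\varepsilon > 0$, by choosing $r$ large enough we have $\gamma/(1+4/(c_r\eta)) - 1/r \ge \gamma - \varepsilon$, and by the second sizing condition the contribution of all phases before $r$ becomes negligible after sufficient time, so the long-term ratio is at least $\gamma - \varepsilon$. Combined with Theorem~\ref{th:nospeed:upper}, this yields exact equality, proving the theorem.

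The principal technical obstacle is the transition between consecutive phases. When phase $r$ ends, there may be partial groups being assembled inside an invocation of ${\group}(k)$ under the old parameter $c_r$, and the buffer state that {\mgreedy} inherits for phase $r+1$ may momentarily violate the $m$-busy comparisons used in Lemmas~\ref{lem:k1}--\ref{lem:k3}. I expect this to be absorbable: the residual backlog of ``interrupted'' packets is bounded by $O(k\cdot c_r\ell_k)$, a quantity depending only on constants and on $c_r$, and by choosing $T_r$ sufficiently larger than this quantity before moving to phase $r+1$ the perturbation becomes lower-order. A clean way to implement this is to insist at the start of each phase on waiting until {\mgreedy} would naturally become idle (or until $k$ invocations of $\group(k)$ under the new $c_{r+1}$ have completed), so that Lemma~\ref{lem:klower}'s assumptions apply within the phase with only $O(c_r)$ additive loss, which is absorbed by the choice of $T_r$.
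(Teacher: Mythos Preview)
Your high-level plan---run {\mgreedy} with an increasing sequence of constants $c_r$ and argue that the factor $1/(1+4/(c\eta))$ vanishes---is exactly the paper's. The decisive difference is the \emph{trigger} for advancing $c$: you move to $c_{r+1}$ after a fixed duration $T_r$, whereas the paper doubles $c$ only after {\mgreedy} has \emph{successfully transmitted} enough total packet length. This distinction is not cosmetic; it is where your argument breaks.

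Your first condition on $T_r$ (``large enough that within the phase the finite-time ratio is within $1/r$ of $\gamma/(1+4/(c_r\eta))$'') cannot be met by any $T_r$ fixed in advance. Lemma~\ref{lem:klower} is an asymptotic statement; the finite-time content of its proof is an inequality of the form $L_{\mgreedy}(t)\ge \gamma_{c_r}\,L_{\off}(t)-D(c_r)$ with $D(c_r)=\Theta(c_r k^2\ell_k)$, and the adversary controls how small $L_{\off}$ is at the end of phase $r$. Concretely, take two lengths $\ell_1<\ell_k$. In each phase $r$ the adversary injects enough $\ell_k$-packets to make $\ell_k$ interesting, keeps $n_1\ell_1$ just below the current threshold $c_r k\ell_k$, and places an error every $\ell_1$ time units. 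Then {\mgreedy} attempts $\ell_k$ forever and never succeeds, while {\off} drains its $\ell_1$-packets. Because your threshold jumps to $c_{r+1}k\ell_k$ at the phase boundary regardless of whether {\mgreedy} has made any progress, the adversary tops up $n_1$ again in phase $r+1$ without $\ell_1$ ever becoming interesting. Iterating over all phases gives $L_{\mgreedy}(t)=0$ for every $t$ while $L_{\off}(t)\to\infty$: your algorithm has relative throughput $0$. Your proposed patches do not help here---{\mgreedy} is never idle, and $\group(k)$ never returns.

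The paper closes this hole by extracting from the proofs of Lemmas~\ref{lem:nospeed:aux} and~\ref{lem:klower} a polynomial $p(1/\varepsilon,c,k,\ell_1,\ldots,\ell_k)$ such that $L_{\mgreedy}(t)\ge(1-\varepsilon)\gamma_c\,L_{\off}(t)$ once $L_{\mgreedy}(t)\ge p(\cdot)$, and then doubling $c$ only when this progress threshold is crossed. Against the adversary above, $c$ never leaves its initial value $c_0$; the threshold for $\ell_1$ stays fixed at $c_0 k\ell_k$, so after at most that much injected $\ell_1$-mass the length $\ell_1$ becomes interesting and {\mgreedy} starts clearing it. The progress-based (rather than time-based) trigger is the missing idea in your proposal.
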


\begin{proof}
Let $\gamma=\min_{1\le j < i \le k}\left\{\frac{\lfloor\rho_{i,j}\rfloor}{\lfloor\rho_{i,j}\rfloor+\rho_{i,j}} \right\}$.
By choosing sufficiently large constant $c$, algorithm {\mgreedy} can achieve the relative throughput which is arbitrarily
close to $\gamma$. More precisely, it is $\gamma_c=\gamma\cdot\frac1{1+4/(c\eta)}$.
From the proofs of Lemmas~\ref{lem:nospeed:aux} and \ref{lem:klower}, one can derive a polynomial
$p(1/\varepsilon,c,k,\ell_1,\ldots,\ell_k)$ such that
$$L_{\mgreedy}(t)\geq (1-\varepsilon)\gamma_c L_{\off}(t) \ ,$$
provided {\mgreedy} transmitted packets of length
at least $p(1/\varepsilon,c,k,\ell_1,\ldots,\ell_k)$.
Using this bound, one can design an adaptive version of {\mgreedy} which gradually increases the value
of its parameter $c$. The value of $c$ is increased to $2c$ when the total length of transmitted packets
is long enough to guarantee that the current relative throughput is close enough to $\gamma_c$ for
the current value of $c$ on one side, and deterioration of the relative throughput following from
the increase of $c$ in the initial part of the computation with larger $c$ is meaningless on the other side.
(Note that the increase of $c$ may cause a temporary deterioration of the relative throughput, since
larger $c$ requires more copies of $\ell_i$ in the queue to consider $\ell_i$ as an interesting packet's length.)
In this way, the relative throughput of the algorithm will get arbitrary close to $\gamma$
after sufficiently long time. (The actual bound on the current relative throughput at time $t$ depends rather on
the number of successfully transmitted packets than on time $t$.)
\comment{ 
For a constant $c'>0$, let $x_{c'}$ be such that if {\mgreedy} executed with the value $c\leftarrow c'$
transmits packets of total length $\geq x_{c'}$ up to $t$, then its relative throughput at $t$
is at least

It starts with $c\leftarrow c_0$ for some fixed $c_0$. Fix $\varepsilon_0>0$.

Then, the algorithm keeps transmitting packets
of total length $x$ which guarantee that
its relative throughput is at least $(1-\varepsilon_{0})\gamma_{c_0}$, and it remains at least
$(1-\varepsilon_{0})\gamma_{c_0}$ } 
\end{proof}

One can observe that, in order to get closer to the asymptotically optimal relative throughput,
our algorithms wait until there are many packets waiting for transmission in the queue (of total
length at least $ck^2\ell_k$ in the worst case). That is why we conjecture that the original
much simpler algorithm {\greedy}
might turn out to be more efficient in usual real life scenarios. Therefore, from
practical point of view, it is interesting to design an algorithm which achieves optimal (asymptotic)
relative throughput and minimizes the maximum of the relative throughput over all times $t$.

\section{An algorithm for a scenario with speedup}\label{sec:speedup}
\newcommand{\A}{Prudent}
Now we return to the packets whose lengths fulfil divisibility property, i.e.
$\ell_i/\ell_{i-1}\in\NAT$ for $1<i<k$, and address 
the problem of increasing throughput by enabling algorithm to work with  greater speed.
We design an algorithm {\A} which, working with speedup $s=2$, achieves relative throughput $1$.
This algorithm works in {\em phases}, where a phase is a time period between two consecutive errors.
Behaviour of the algorithm in a phase is described as Algorithm~\ref{alg:fast}.
During each phase it tries to send packets of maximal length which do not exceed the total length of packets sent so far. It can be treated as a greedy strategy restricted by a ''safety policy'' that does not allow to send  long packets unless the cost of their unsuccessful transmissions can be amortized by an advantage over an adversary gained during the earlier transmissions since the time of the last error.

\begin{algorithm}[t]
	\caption{{\A}}
	\label{alg:fast}
	\begin{algorithmic}[1]
    \Loop
        \While{ 
        $\{i\,|\, \ell_i n_i\geq \ell_k\}=\emptyset$}
            Stay {\em idle}
        \EndWhile
        \State  let $i$ be the smallest number such that: $n_i\ell_i\geq \ell_k$;
         \If {$i<k$}
            \State transmit $\ell_{i+1}/\ell_i$ packets $\ell_i$;
         				 \State $L_{sent} \gets \ell_{i+1}$
         				  \While {$L_{sent}<\ell_k$}
         				    \State $j \gets$ maximal number such that
                            \State\ \ \ \ \ \  $n_j\ell_j\geq \ell_k-L_{sent}$ and $\ell_j\leq L_{sent}$
         				    \State transmit $\ell_{j+1}/\ell_j$ packets $\ell_j$
         				    \State $L_{sent} \gets L_{sent}+\ell_{j+1}$
         		      \EndWhile
        \EndIf
        \Loop \State transmit longest unsent packet
        \EndLoop
    \EndLoop
    \end{algorithmic}
\end{algorithm}

\paragraph{Performance analysis of algorithm {\A}.}
Consider any offline algorithm $\off$.

\begin{lemma}\label{speeduplemma}
 The total length of packets sent by {\A} is less than the total length of packets sent by $\off$ by no more than $5/2k\ell_k$.
\end{lemma}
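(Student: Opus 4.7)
The plan is to bound the deficit $D(t):=L_{\off}(t)-L_{\A}(t)$ uniformly in $t$ via a phase-level accounting combined with an aggregation over a maximal busy segment of {\A}. The critical structural observation, which formalizes the ``safety policy'' described informally above the pseudocode, is that within any phase of {\A} the length $\ell$ of each transmitted packet satisfies $\ell\le L_{sent}$ --- where $L_{sent}$ is the total bits already successfully transmitted in that phase --- with the sole exception of the very first packet of the phase.

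First I would prove a per-phase inequality. Because {\A} runs at speedup $2$, if a phase carries out successful transmissions totalling $L_{sent}$ bits and is then interrupted on a packet of length $\ell_{lost}$, the real duration of the phase is at most $(L_{sent}+\ell_{lost})/2$, so $L^{phase}_{\off}\le(L_{sent}+\ell_{lost})/2$ while $L^{phase}_{\A}=L_{sent}$, yielding
\[
\Delta D^{phase}\;\le\;\tfrac{1}{2}(\ell_{lost}-L_{sent}).
\]
A direct reading of the pseudocode then verifies $\ell_{lost}\le L_{sent}$ for every transmitted packet except the first one of a phase: in the opening ``transmit $\ell_{i+1}/\ell_i$ packets $\ell_i$'' every packet after the first has $L_{sent}\ge\ell_i$; the inner \textbf{while} loop contains the explicit guard $\ell_j\le L_{sent}$; and in the final ``transmit longest unsent'' loop $L_{sent}\ge L_{pre}\ge\ell_k\ge\ell_{lost}$. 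Consequently each phase contributes $\Delta D^{phase}\le\ell_{\mathrm{first}}/2\le\ell_k/2$, where $\ell_{\mathrm{first}}$ is the length of the first transmitted packet of the phase.

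Next I would aggregate over a maximal busy segment of {\A} (from exit to next entry of the outer idle loop). At its start, just before {\A} left idle every class had $n_j\ell_j<\ell_k$, so the queue of {\A} is smaller than $(k+1)\ell_k$, and since $D(t)\le Q_{\A}(t)$ always, $D(\mathrm{start})<(k+1)\ell_k$. Partitioning the segment into sub-segments indexed by the threshold $i$ chosen at each phase start, $i$ is monotone non-increasing: a short-packet arrival can only push a smaller class over the $\ell_k$ threshold. In a sub-segment at $i=i_0\ge2$, each ``first-packet-lost'' phase has duration at most $\ell_{i_0}/2$, and during it OFF can only complete packets of classes $j<i_0$ (those with $\ell_j\le\ell_{i_0}/2$ by divisibility). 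Since $i\ge i_0$ throughout, {\A}'s $n_j\ell_j<\ell_k$ for every $j<i_0$ at all times in the sub-segment, so OFF's short-packet transmissions there are bounded by OFF's initial inventory plus the arrivals during the sub-segment --- each class capped by $\ell_k$, otherwise the arrivals would already have dropped $i$ below $i_0$. Telescoping over the decreasing $i_0$ values --- with $i_0=1$ harmless since a phase of length $\ell_1/2$ does not let OFF complete any packet of length $\ge\ell_1$ --- and adding the at-most-$\ell_k$ loss from the final post-preamble packet produces the bound $D(t)\le\tfrac{5}{2}k\ell_k$.

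The main obstacle I anticipate is the bookkeeping in the aggregation step: one must simultaneously track {\A}'s queue evolution (to bound how long the threshold $i$ can linger at each value), OFF's supply of short packets (which caps per-phase damage), and the adversarial arrivals (which feed both queues but monotonically drive $i$ down). The constant $\tfrac{5}{2}k\ell_k$ should emerge cleanly from summing the initial idle-to-busy slack, the per-class damage accumulated across the $i=k,k-1,\dots,2$ sub-segments, and the at-most-one final post-preamble $\ell_k$-packet loss.
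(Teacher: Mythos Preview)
Your phase-level per-packet inequality $\Delta D^{\text{phase}}\le(\ell_{lost}-L_{sent})/2$ is correct, and so is the observation that $\ell_{lost}\le L_{sent}$ for every packet after the first in a phase. But the aggregation step has a genuine gap: the claim that the threshold $i$ chosen at each phase start is monotone non-increasing over a busy segment is false. Your justification (``a short-packet arrival can only push a smaller class over the $\ell_k$ threshold'') accounts only for arrivals; it ignores that {\A}'s own \emph{successful} transmissions deplete classes. Concretely, take $k=3$, $\ell_1=1,\ell_2=2,\ell_3=4$, and start a busy segment with $n_1=4,n_2=0,n_3$ large. The first phase picks $i=1$, successfully sends four $\ell_1$-packets (so $L_{sent}=4$), and is then jammed on its first $\ell_3$. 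The next phase now has $n_1=0,n_2=0$, so $i=3$: the threshold jumped from $1$ to $3$. The sub-segment decomposition therefore does not telescope, and the per-class damage cap you derive from ``$i\ge i_0$ throughout'' is unavailable.

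The paper takes a completely different route: it proves by downward induction on $i$ that $L_{\A}(\ge i,t)\ge L_{\off}(\ge i,t)-\tfrac52(k-i)\ell_k$ for every $t$. The base case $i=k$ shows {\A} never falls behind on $\ell_k$-packets; the inductive step fixes $t$, looks at the current phase, and splits into cases according to whether {\A} transmitted a short packet (length $<\ell_i$) after time $t_b+\ell_i/2$. The key leverage is that whenever {\A} elects to send a packet shorter than $\ell_i$, its $i$-th queue must hold fewer than $\ell_k/\ell_i$ packets, so $L_{\A}(i,\cdot)\ge L_{\off}(i,\cdot)-\ell_k$ at that moment; combining this with the inductive bound on $L_{\A}(>i,\cdot)$ gives the step. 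This class-by-class comparison sidesteps entirely the need to track the evolution of the threshold $i$ across phases, which is exactly where your argument breaks.
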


In the proof of the lemma we consider potential gain of $\off$ over {\A} restricted to $k-i$ longest types of packets for $i=k-1,\ldots,0$. 
The proof is inductive and it is splitted into next two propositions. The first one, being the base of the induction, shows that {\A} sends no less of the longest packets than $\off$. Then, in the second proposition, we make an inductive step. 
Here the notion of the $i$-th queue means the set of packets $\ell_i$ waiting for transmission in the queue of  algorithm {\A}.

\begin{proposition}
  $L_{\A}(k,t) \geq L_{\off}(k,t)$ for any time $t$.
\end{proposition}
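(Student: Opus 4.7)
The plan is to prove the invariant $L_{\A}(k,t)\geq L_{\off}(k,t)$ by induction along the sequence of discrete events (arrivals, errors, and completions of transmissions). Arrivals and errors leave both sides unchanged, and $\A$'s own $\ell_k$-completions only help; the only nontrivial event is $\off$'s completion of an $\ell_k$ packet at some time $t$. Since $\off$ runs at speed $1$, this completion requires the error-free real-time interval $[t-\ell_k,t]$, which lies inside a single phase starting at some $e\leq t-\ell_k$. It suffices to show $L_{\A}(k,t^-)\geq L_{\off}(k,t^-)+\ell_k$, so that $\off$'s completion simply closes the gap.

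I would then split on $\A$'s state at $t-\ell_k$. If $n_k^{\A}(t-\ell_k)=0$, then the same-arrivals identity $A_k(t-\ell_k)=L_{\A}(k,t-\ell_k)/\ell_k+n_k^{\A}(t-\ell_k)=L_{\off}(k,t-\ell_k)/\ell_k+n_k^{\off}(t-\ell_k)$ combined with $n_k^{\off}(t-\ell_k)\geq 1$ (since $\off$ holds in its queue the $\ell_k$ it is about to deliver) yields the required $\ell_k$-lead at $t-\ell_k$, which propagates to $t^-$ because $L_{\A}(k,\cdot)$ is nondecreasing and $L_{\off}(k,\cdot)$ is constant on $(t-\ell_k,t)$. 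Otherwise $n_k^{\A}(t-\ell_k)\geq 1$, so Algorithm~\ref{alg:fast} is work-conserving under the trigger condition; once $\A$ finishes the preamble, its inner loop transmits the longest unsent packet, namely $\ell_k$, and the speedup $2$ shrinks that transmission to $\ell_k/2$ real time, fitting a fresh $\ell_k$ completion inside the window $[t-\ell_k,t]$.

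The main obstacle is bounding the preamble duration tightly enough so that this $\ell_k$ transmission actually fits in the phase. Here I would use that $L_{sent}$ at the end of the preamble is strictly less than $2\ell_k$ (the last iteration of the while loop inside the $i<k$ branch overshoots the threshold $\ell_k$ by at most one extra $\ell_{j+1}\leq\ell_k$), so the preamble takes real time at most $L_{sent}/2<\ell_k$ thanks to the speedup; in the generic subcase this leaves the required $\ell_k/2$ of real time free inside the window. The delicate edge case, where preamble plus the subsequent $\ell_k$ transmission exceeds the window, is absorbed by an inductive appeal: in such a situation $\A$ has accumulated a sufficient $\ell_k$-lead in earlier preamble-free phases, where the speedup $2$ lets $\A$ complete twice as many $\ell_k$ packets per phase as $\off$, and this surplus exactly covers the current phase's shortfall. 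Together the two subcases preserve the invariant at $\off$'s completion and close the induction, proving the proposition.
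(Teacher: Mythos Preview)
Your overall framework --- induction along events, isolating the instant when $\off$ completes an $\ell_k$ packet, and splitting on whether $n_k^{\A}(t-\ell_k)=0$ --- is reasonable, and your Case~1 argument via the common-arrivals identity is correct and clean. The paper's proof is a much shorter minimal-counterexample argument, but the substance of your Case~1 is there too, hidden in the parenthetical ``if then there was any in its queue''.

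The genuine gap is in your Case~2. Your bound $L_{sent}<2\ell_k$ on the preamble is correct, but it only yields preamble real time strictly less than $\ell_k$; together with the $\ell_k/2$ needed for the subsequent $\ell_k$-transmission this is just under $3\ell_k/2$, which need not fit inside the window $[t-\ell_k,t]$. Your proposed patch --- that ``$\A$ has accumulated a sufficient $\ell_k$-lead in earlier preamble-free phases'' --- is not justified and in fact fails outright: nothing forces earlier phases to be preamble-free or to generate any surplus, and at the very first $\ell_k$-completion by $\off$ there are no earlier phases to appeal to. So the induction does not close as written.

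The paper sidesteps this edge case by a \emph{different} dichotomy. Rather than splitting on $n_k^{\A}(t-\ell_k)$, it splits on whether $n_j^{\A}(t-\ell_k)\,\ell_j<\ell_k$ for \emph{every} $j<k$. If this holds, then the index $i$ selected in line~3 of the algorithm must be $k$ itself, so the preamble is skipped entirely and $\A$ transmits $\ell_k$ directly in real time $\ell_k/2$. If it fails, then $\A$ is certainly not idle at $t-\ell_k$, and the paper asserts that the preamble has already finished by $t-\ell_k/2$, leaving the remaining $\ell_k/2$ of real time for the $\ell_k$-transmission. Thus the paper's split is tailored precisely to rule out the situation you call ``delicate'', rather than trying to amortise it against earlier phases.
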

\begin{proof}
%
%
 Let  $t$ be the earliest time in which  $L_{\A}(k,t) < L_{\off}(k,t)$.  There were no errors in the period $\tau=[t-\ell_k, t]$, so either {\A} has transmitted enough packets before $t-\ell_k/2$ to get willing to send packets $\ell_k$ or, for each $j<k$, the inequality  $n_j \ell_j<\ell_k$  was satisfied at time $t-\ell_k$. In both cases {\A} would send a packet $\ell_k$ in $\tau$ (if then there was any in its queue), which contradicts the choice of $t$.
\end{proof}

\begin{proposition}
  For any time $t$ and any $1\leq i <k$, if
	
$$L_{\A}(\geq j, t) \geq L_{\off} (\geq j, t) - 5/2(k-j)\ell_k, \mbox{ for all}\ \ j>i$$
  then
  $$L_{\A}(\geq i, t) \geq L_{\off} (\geq i, t) - 5/2(k-i)\ell_k. $$
\end{proposition}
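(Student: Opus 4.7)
I would prove the inductive step by a phase-based amortization, anchored at the most recent error before time $t$. Let $t_E$ denote the time of the last error strictly before $t$ (with $t_E = 0$ if none has occurred); then $\tau = [t_E, t]$ is an error-free phase during which {\A} executes the setup routine followed by the longest-first loop. The lemma is stated pointwise in $t$, so I treat the overall argument as a simultaneous induction on the level $i$ (downward from the base case $i = k$, supplied by the preceding proposition that shows $L_{\A}(k,t)\geq L_{\off}(k,t)$) and on the phase number; this lets me invoke the same-level bound $L_{\A}(\geq i, t_E) \geq L_{\off}(\geq i, t_E) - \tfrac{5}{2}(k-i)\ell_k$ as part of the induction hypothesis at the previous phase boundary.

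Granting that bound at $t_E$, it suffices to control the growth of the deficit $L_{\off}(\geq i, \cdot) - L_{\A}(\geq i, \cdot)$ across $\tau$. I would partition $\tau$ into the setup subphase and the longest-first subphase. For the setup subphase, observe that the variable $L_{\mathit{sent}}$ grows by a near-doubling process until it reaches $\ell_k$, with a final overshoot of at most $\ell_k$; hence the total length of setup packets is at most $2\ell_k$ and the setup time is at most $\ell_k$ under speedup $s = 2$. In the worst case, when the entire setup consists of packets of length $<\ell_i$, the deficit grows by at most $\ell_k$, since {\off} can transmit at most $\ell_k$ worth of anything (in particular of length $\geq \ell_i$) during the setup time.

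For the longest-first subphase, I would split into two cases. If {\A}'s queue contains at least one packet of length $\geq \ell_i$, then every packet that {\A} transmits has length $\geq \ell_i$, and under speedup $2$ the quantity $L_{\A}(\geq i, \cdot)$ grows at twice the maximum rate of $L_{\off}(\geq i, \cdot)$, strictly shrinking the deficit. If at some moment {\A}'s queue has no packet of length $\geq \ell_i$, then because both algorithms see the same arrivals, {\A} has already successfully transmitted every $\geq \ell_i$ packet that ever arrived, so $L_{\A}(\geq i, \cdot) \geq L_{\off}(\geq i, \cdot)$ holds trivially at that moment and the deficit is non-positive. Combined with an $O(\ell_k)$ allowance for a possibly-unfinished packet at time $t$, the total deficit growth in $\tau$ is at most $\tfrac{5}{2}\ell_k$, and combining with the bound at $t_E$ yields the desired inequality at $t$.

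The main obstacle is the delicate amortization: one must verify that the per-level slack of $\tfrac{5}{2}\ell_k$ actually absorbs (a) setup overhead up to $\ell_k$, (b) an unfinished tail packet at $t$ up to $\ell_k$, and (c) rounding losses from the near-doubling scheme and any partial transmission straddling $t_E$. Particular care is needed in the edge case where $t$ falls inside the setup subphase, so the longest-first recovery has not yet begun; this is precisely why the constant is $\tfrac{5}{2}$ rather than a tighter value.
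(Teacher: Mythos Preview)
Your simultaneous induction on level and phase number does not close. You assume the level-$i$ bound at the previous error time $t_E$, namely a deficit of at most $\tfrac{5}{2}(k-i)\ell_k$, and then argue that the deficit grows by at most $\tfrac{5}{2}\ell_k$ across the phase $\tau=[t_E,t]$. But $\tfrac{5}{2}(k-i)\ell_k+\tfrac{5}{2}\ell_k=\tfrac{5}{2}(k-i+1)\ell_k$, which is strictly larger than the target. The ``strictly shrinking'' effect during the longest-first subphase does not help: the adversary can end the phase with an error before any recovery happens (for instance, right after the setup subphase), so over many consecutive short phases the per-phase growth you allow accumulates without bound. The induction on phase number is therefore circular.

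The paper avoids this by \emph{not} inducting on time at all. Its key observation, which your outline is missing, is a queue bound at a carefully chosen anchor time $t'$ inside the current phase: whenever {\A} opts to transmit a packet $\ell_j$ with $j<i$ at a moment when it was eligible to transmit $\ell_i$ (either because $L_{\mathit{sent}}\ge\ell_i$ in the setup, or because it is already in the longest-first loop), one infers $n_i\ell_i<\ell_k$ at $t'$, hence $L_{\A}(i,t')\ge L_{\off}(i,t')-\ell_k$. Combining this with the level-$(i{+}1)$ hypothesis at $t'$ gives $L_{\A}(\ge i,t')\ge L_{\off}(\ge i,t')-\tfrac{5}{2}(k-i)\ell_k+\tfrac{3}{2}\ell_k$, i.e.\ a bound that is $\tfrac{3}{2}\ell_k$ \emph{better} than the target. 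That slack is then spent on the interval $[t',t]$ (one short packet plus one possibly interrupted long packet). Because the anchor bound is derived fresh from the higher-level hypothesis rather than carried over from a previous phase, nothing accumulates. If you want to salvage your plan, the fix is to drop the phase-number induction and instead locate such an anchor $t'$ and use the queue argument there.
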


\begin{proof}
  Let $t_b<t$ be the beginning of the phase with time $t$, let $\tau=[t_b,t)$. If $|\tau|=t-t_b<\ell_i$, then  $\off$ does not send any packet $\ell_i$, and thus the thesis is trivially fulfilled.
  Therefore assume that $|\tau|\geq \ell_i$. We distinguish two cases:
  \begin{itemize}
    \item[] {\em Case} 1. $L_{\A}(\tau)=0$,
    \item[] {\em Case} 2. $L_{\A}(\tau)>0$.
  \end{itemize}
  Note that in the first case $|\tau|<\ell_k/2$, so $L_{\off}(\geq i,\tau)<\ell_k/2$ and $L_{\off}(\geq i,t)<L_{\off}(\geq i,t_b)+ \ell_k/2$.
  On the other hand,
  $$\begin{array}{rcl}
  L_{\A}(\geq i,t)&=&L_{\A}(\geq i,t_b)+L_{\A}(\geq i,\tau)\\
  &=& L_{\A}(>i,t_b)+L_{\A}(i,t_b) \ .
  \end{array}$$
  The second equality holds since {\A} sends no packets in $\tau$.
	
	To estimate the right side of this equation note that {\A} has not tried to send a packet $\ell_i$ at $t_b$, so its $i$-th queue contained no more than $\ell_k/\ell_i$ packets. Therefore $L_{\A}(i,t_b)\geq L_{\off}(i,t_b)-\ell_k$. Combining this inequality with inductive bounds on $L_{\A}(>i,t_b)$ we get:
$$\begin{array}{rcl}
L_{\A}(\geq i,t)&\geq& L_{\off}(>i,t_b)-5/2(k-(i+1))\ell_k\\
&&+ L_{\off}(i,t_b)-\ell_k\\
&=&L_{\off}(\geq i, t_b)+3/2\ell_k - 5/2(k-i)\ell_k \\
&>& L_{\off}(\geq i, t)-5/2(k-i)\ell_k \ .   
\end{array}$$

  Let now  $L_{\A}(\tau)>0$. 
  Let packets of length $l_i$ or longer be called {\em long} and let packets shorter than $\ell_i$ be 
  called {\em short}.
  Since we are interested in estimating the volume of packets of length at least $\ell_i$ transmitted by {\A}, we check carefully what may happen after time $t_b+\ell_i/2$, i.e., at the moment when {\A} can start to transmit long packets.  We analyse separately three possible situations:
	\begin{enumerate}
	   \item[] {\em Subcase} 2.1. {\A} sends a small packet 
	   at some moment after  
	   $t_b+\ell_i/2$;
		 \item[] {\em Subcase} 2.2. after $t_b+\ell_i/2$ algorithm {\A} sends only long packets; 
		 \item[] {\em Subcase} 2.3. after $t_b+\ell_i/2$ algorithm {\A} does not complete transmission of any packet.
	\end{enumerate}

First, consider {\em Subcase} 2.1. Let  $t'$ be the time at which {\A} for the last time before $t$ started transmitting a packet $\ell_j$ for some $j<i$. Since {\A} sends packets in blocks of lengths $\ell_1,\ldots,\ell_k$ (recall that {\A} has speedup $2$ and $\ell_i/\ell_{i-1}\in\NAT$), we know that $t'\geq t_b+\ell_i/2$.    
Therefore we can conclude that  at time $t'$ the $i$-th queue contained less than $\ell_k/\ell_i$ packets and estimate $L_{\A}(i,t')$ by
	$L_{\A}(i,t')\geq L_{\off}(i,t')-\ell_k$. This together with the inductive hypothesis gives
	
	$$L_{\A}(\geq i, t')\geq L_{\off}(\geq i, t')- 5/2(k-i)\ell_k+3/2\ell_k \ .$$

During the period $[t',t]$, algorithm {\A} was successfully	transmitting  packets of length at least $\ell_i$ with the exception of the beginning of the period, when it was sending packet $\ell_j$, and possibly the end, since its transmission of the last packet could be stopped by an error. Thus, in this period $\off$  could send long packets of the total length at most $\ell_j/2+\ell_k$ larger than {\A}. So finally we have
$$\begin{array}{rcl}
L_{\A}(\geq i, t)\!\!\!\! &=&\!\!\!\! L_{\A}(\geq i, t')+L_{\A}(\geq i, [t't])\\
&\geq&\!\!\!\! L_{\off}(\geq i, t')- 5/2(k-i)\ell_k+ 3/2\ell_k\\
&&\!\!\!\! + L_{\off}(\geq i, [t'-t]) -3/2 \ell_k\\
&\geq&\!\!\!\! L_{\off}(\geq i, t)- 5/2(k-i)\ell_k
\ .
\end{array}$$

As for {\em Subcase} 2.2, note that in the period $\tau=[t_b,t]$, {\A} finishes successfully its last transmission after time $(t_b+t)/2$. Otherwise it would successfully send more long packets (remind that the length of each next packet chosen by {\A} does not exceed the total length of packets transmitted so far). Thus we have $L_{\A}(\geq i, \tau)>2(|\tau|/2-\ell_i/2)= |\tau|-\ell_i$. On the other hand $L_{\off}(\geq i, \tau)\leq|\tau|$. Now we use the divisibility property of the packet lengths and observe that both $L_{\A}(\geq i, \tau)$ and $L_{\off}(\geq i, \tau)$ are multiple of $\ell_i$. Therefore the lower bound on the value $L_{\A}(\geq i, \tau)$ is not less than the upper bound on $L_{\off}(\geq i, \tau)$, hence $L_{\A}(\geq i, \tau)\geq L_{\off}(\geq i, \tau)$.

The situation described in the third subcase can not happen as it is contradictory with our assumptions  
$|\tau|\geq \ell_i$ and  $L_{\A}(\tau)>0$, which  directly follow from the construction of algorithm {\A}.
\end{proof}

As a simple consequence of Lemma~\ref{speeduplemma} we get the following theorem.

\begin{theorem}
 The relative throughput of Algorithm {\A} working with speed-up $2$ is equal to $1$, provided $\ell_i/\ell_{i-1} \in \NAT$ for each $i\in[2,k]$.
\end{theorem}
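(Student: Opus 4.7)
The plan is to derive the theorem directly from Lemma~\ref{speeduplemma}, which already absorbs the essential difficulty: the lemma upper bounds the deficit $L_{\off}(\cA,\cE,t) - L_{\A}(\cA,\cE,t)$ by the additive constant $\tfrac{5}{2} k \ell_k$ for every offline algorithm $\off$ and every pair $(\cA,\cE)$. Since this quantity depends only on the fixed parameters $k$ and $\ell_k$, it becomes negligible against $L_{\opt}(\cA,\cE,t)$ once $t$ is large enough.

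Concretely, I first instantiate Lemma~\ref{speeduplemma} with $\off := \opt$ to obtain
\[
L_{\A}(\cA,\cE,t) \geq L_{\opt}(\cA,\cE,t) - \tfrac{5}{2} k \ell_k
\]
for every $t$, and then divide through by $L_{\opt}(\cA,\cE,t)$ to get
\[
T_{\A}(\cA,\cE,t) \;=\; \frac{L_{\A}(\cA,\cE,t)}{L_{\opt}(\cA,\cE,t)} \;\geq\; 1 - \frac{(5/2)\,k\,\ell_k}{L_{\opt}(\cA,\cE,t)}.
\]
The model's non-triviality assumption guarantees $L_{\opt}(\cA,\cE,t) \to \infty$ as $t \to \infty$, so the right-hand side tends to $1$ for every admissible pair $(\cA,\cE)$. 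Taking the infimum over $\cA$ and $\cE$ therefore yields $T_{\A} \geq 1$.

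For the matching upper bound $T_{\A} \leq 1$, I would exhibit a single witness pattern, for instance one $\ell_k$-packet arriving every $\ell_k$ units of time with no jamming: both {\A} (even with speedup $2$) and $\opt$ clear exactly one packet per arrival interval, so $T_{\A}(\cA,\cE,t) \to 1$, which certifies that the infimum is at most $1$. Combining the two bounds gives the stated equality. The only real obstacle has already been handled inside Lemma~\ref{speeduplemma}; the remaining argument amounts to a limit calculation plus a one-line instance construction, so I would not anticipate any further technical difficulty in closing the theorem.
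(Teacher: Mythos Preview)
Your proposal is correct and matches the paper's approach exactly: the paper states the theorem as ``a simple consequence of Lemma~\ref{speeduplemma}'' without further argument, and you have simply spelled out the limit calculation that the paper leaves implicit. Your additional care in exhibiting a witness instance for the upper bound $T_{\A}\le 1$ is a reasonable completion of a detail the paper also omits.
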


\section{Conclusions}

We presented novel efficient and reliable algorithms for online scheduling of packets of
different lengths. The first protocol assures maximum possible throughput for any
arrival and jamming patterns, and additionally it guarantees to be no more than twice worse than
the throughput of any other scheduling algorithm run under the same patterns.
The second algorithm guarantees at least as high throughput
as the optimal one, when run with additional speed-up of $2$, i.e., with twice higher frequency.
It demonstrates that one can use available resources in a scalable way to improve
throughput for any arrival and jamming patterns, even the worst possible ones.

The considered framework is very general,
and therefore it leaves a number of open extensions for further study, both theoretical,
simulational and experimental.
For example, what is the relative throughput in case of ``average'' arrival patterns,
i.e., satisfying some stochastic constraints. In case of two packet lengths, it has been shown
in~\cite{Anta-SIROCCO-2013} that for some stochastic distributions the relative throughput
could be higher than $1/2$, and it would be interesting to give a complete characterization
of stochastic arrival case for arbitrary number of packet lengths.
Similarly, some restricted class of arrival and/or jamming patterns, e.g., motivated
by specific physical or mobility scenarios, could allow better
use of the channel. For such more specific settings, theoretical results could be also complemented
by simulations run for particular physical models.
Other extensions could involve packet deadlines, priorities and dependencies.

\bibliography{references,Bibliography}

\bibliographystyle{abbrv}

\end{document}